\theoremstyle{plain}
\newtheorem{lem}{Lemma}[section]
\newtheorem{prop}[lem]{Proposition}
\newtheorem{thm}[lem]{Theorem}
\newtheorem{cor}[lem]{Corollary}
\theoremstyle{definition}
\newtheorem{defn}[lem]{Definition}
\theoremstyle{remark}
\newtheorem{rmk}[lem]{Remark}
\newcommand{\N}{\mathbb{N}}
\newcommand{\NN}{\mathbb N}
\newcommand{\Z}{\mathbb{Z}}
\newcommand{\Fp}{\mathbb{F}_p}
\newcommand{\Q}{\mathbb{Q}}
\newcommand{\QQ}{\mathbb{Q}}
\newcommand{\Qp}{\Q_p}
\newcommand{\val}{\mathrm{val}}
\newcommand{\Ke}{K^{\textrm{\rm ex}}}
\DeclareMathOperator{\LT}{\mathrm{LT}}
\DeclareMathOperator{\LC}{\mathrm{LC}}
\DeclareMathOperator{\LM}{\mathrm{LM}}
\DeclareMathOperator{\LTE}{\mathrm{LTE}}
\DeclareMathOperator{\Ecart}{\textup{\textrm{Écart}}}
\DeclareMathOperator{\card}{\mathrm{card}}
\DeclareMathOperator{\Supp}{\mathrm{Supp}}
\DeclareMathOperator{\spoly}{\mathrm{S-Poly}}
\newcommand{\WNF}{\mathsf{WNF}}
\newcommand{\Terms}{\mathrm{Terms}}
\newcommand{\X}{\mathbf{X}}
\renewcommand{\i}{\mathbf{i}}
\renewcommand{\r}{\mathbf{r}}
\newcommand{\s}{\mathbf{s}}
\newcommand{\ifnonempty}[3]{%
  \def\tempa{}%
  \def\tempb{#1}%
  \ifx\tempa\tempb 
  #3 
  \else            
  #2
  \fi}
\newcommand{\Kz}{K^\circ}
\newcommand{\KzX}[1][]{K\{ \X \ifnonempty{#1}{; #1}{} \}^\circ}
\newcommand{\KX}[1][]{K \{ \X \ifnonempty{#1}{; #1}{} \}}
\newcommand{\TzX}[1][]{T\{ \X \ifnonempty{#1}{; #1}{} \}^\circ}
\newcommand{\TX}[1][]{T \{ \X \ifnonempty{#1}{; #1}{} \}}
\newcommand{\TT}{\mathbb T}
\newcommand{\TTzX}[1][]{\TT\{ \X \ifnonempty{#1}{; #1}{} \}^\circ}
\newcommand{\TTX}[1][]{\TT \{ \X \ifnonempty{#1}{; #1}{} \}}
\def\todo#1{\ \!\!{\color{red} #1}}
\begin{document}

\fancyhead{}

\title{On Polynomial Ideals And Overconvergence In Tate Algebras}

\author{Xavier Caruso}
\affiliation{Université de Bordeaux,
  \institution{CNRS, INRIA}
  \city{Bordeaux}
  \country{France}}
\email{xavier.caruso@normalesup.org}

\author{
  Tristan Vaccon}
\affiliation{Universit\'e de Limoges;
  \institution{CNRS, XLIM UMR 7252}
  \city{Limoges}
  \country{France}  
  \postcode{87060}  
}
\email{tristan.vaccon@unilim.fr}

\author{
  Thibaut Verron}
\affiliation{Johannes Kepler University, 
  \institution{Institute for Algebra}
  \city{Linz}
  \country{Austria}  
}
\email{thibaut.verron@jku.at}

\thanks{This work was supported by the ANR project CLap--CLap
(ANR-18-CE40-0026-01).
T.~Verron was supported by the Austrian FWF grant P34872.}

\begin{abstract}
In this paper, we study ideals spanned by polynomials or overconvergent series in a Tate algebra. 
With state-of-the-art algorithms for computing Tate Gröbner bases, even if the input is polynomials, the size of the output grows with the required precision, both in terms of the size of the coefficients and the size of the support of the series.

We prove that ideals which are spanned by polynomials admit a Tate Gröbner basis made of polynomials, and we propose an algorithm, leveraging Mora's weak normal form algorithm, for computing it.
As a result, the size of the output of this algorithm grows linearly with the precision.

Following the same ideas,
 we propose an algorithm
 which computes an overconvergent basis for an ideal spanned by overconvergent series.

 Finally, we prove the existence of a universal analytic Gröbner basis for polynomial ideals in Tate algebras, compatible with all convergence radii.
\end{abstract}

\begin{CCSXML}
  <ccs2012>
  <concept>
  <concept_id>10010147.10010148.10010149.10010150</concept_id>
  <concept_desc>Computing methodologies~Algebraic algorithms</concept_desc>
  <concept_significance>500</concept_significance>
  </concept>
  </ccs2012>
\end{CCSXML}

\ccsdesc[500]{Computing methodologies~Algebraic algorithms}


\vspace{-1.5mm}
\terms{Algorithms, Theory}

\keywords{Algorithms, Gröbner bases, Tate algebra, Mora's algorithm,
Universal Gröbner basis}

\maketitle

\widowpenalty = 10000
\addtolength{\textfloatsep}{-0.45cm} 

\section{Introduction}
\label{sec:introduction}

The study of $p$-adic geometric objects has taken significant importance in the 20th century, as a crucial component of algebraic number theory.
Beyond polynomials and algebraic geometry, Tate developed a theory of $p$-adic analytic varieties, called rigid geometry.
This theory is now central to many developments in number theory.
The fundamental underlying algebraic object is Tate algebras, that is, algebras of convergent multivariate power series over a complete discrete valuation field $K$ (for instance $\QQ_{p}$).

In earlier papers, the authors examined those Tate series from a computational point of view, with the hope to develop an algorithmic toolbox on par with what is available for polynomials.
The main result was that it is possible to define and compute Gröbner bases of Tate ideals, in a way compatible with the usual theory on polynomials over the residue field (e.g. $\Fp$).
We also examined how different algorithms from the polynomial case transfer to Tate settings.

A key property of Tate series is their convergence radius.
Tate algebras are parameterized by the convergence radius of their series.
If a series is convergent on a certain disk, it is certainly convergent on all disks with smaller radius.
This gives a natural embedding of one Tate algebra into another if the convergence radii of the latter are smaller than those of the former.
This property is a key feature of rigid geometry.
In fact, the canonical embedding of $K[\X]$ into Tate algebras is a particular case of such an embedding, with polynomials seen as series with infinite convergence radius.

Beyond this theoretical interest, recognizing and exploiting such overconvergence properties would help making the algorithms more efficient.
Indeed, a limiting factor of the current algorithms is the cost of the reductions, in particular as the precision grows.
Series with a larger convergence radius are series which converge faster, and thus require to compute fewer terms while reducing.
The challenge in taking advantage of those properties lies in designing algorithms ensuring that this overconvergence property is preserved in the course of the algorithm.

In~\cite{CVV3}, we showed how to generalize the FGLM algorithm~\cite{FGLM93} to Tate algebras.
A result was that this algorithm allows, for zero-dimensional Tate ideals embedded into a Tate algebra with a less restrictive convergence condition, to convert the Gröbner basis.
This opens the possibility, for zero-dimensional ideals, of computing a Gröbner basis in the smaller Tate algebra, where all series have the stronger convergence property, and then using FGLM to convert the result.

In this work, we consider ideals spanned by polynomials in a Tate algebra, from this point of view of overconvergence.
We show that in this case, the ideal admits a Gröbner basis comprised only of polynomials, and we propose an algorithm computing such a basis, and working only with polynomials.
The key ingredient is to use a variant of Mora's weak normal form~\cite{Mora} to compute the head reductions instead of standard reduction.
This algorithm computes reductions up to an invertible factor, with the additional property that all series appearing in the computations are actually polynomials.
In order to do so, it uses specific metrics, called \emph{écarts}, to select the polynomials to use for reduction at each step.
This notion of écart is crucial in proving that the Gröbner basis computation terminates.
In the polynomial case, the écart is simply defined as the difference between the degree of the polynomial and that of its leading term, and in the Tate case we need to refine that with a comparison on the set-difference of the supports of the polynomials.

The resulting algorithm offers a better control for the complexity as a function of the precision.
Concretely, given a set of polynomials in $\QQ[\mathbf{X}]$, and a prime number $p$, we consider the ideal $I$ spanned by the polynomials in $\QQ_{p}\{\mathbf{X}\}$.
Using existing algorithms for Tate Gröbner bases, we can compute a Gröbner basis of $I$ modulo $p^{N}$ for all $N$.
But the output of such a computation will be truncated series, and if we increase the precision $N$, the size of their support typically increases, and even quantifying that growth is not an easy task.
By contrast, the algorithm which we present here only computes polynomials.
So once the precision is large enough, the supports will be completely determined and will not grow anymore.
Asymptotically, the size of the output grows linearly with the precision, and the complexity of the algorithm grows at the same rate as the cost of coefficient arithmetic.

The same idea can be used for ideals spanned by overconvergent series.
With a further refinement of the écart in order to take the valuation of the coefficients into account, we prove that Mora's algorithm allows to compute overconvergent remainders as a result of reducing overconvergent series, and that the modified version of Buchberger's algorithm converges, and computes a basis comprised of overconvergent series.

In later sections, we examine an application of those results for ideals spanned by polynomials in a Tate algebra, namely eliminating variables.
This operation is fundamental in effective algebraic geometry, by allowing to compute various ideal operations such as saturation or intersection.
However, in the Tate setting, due to the nature of the term ordering, computing an elimination ideal by a Gröbner basis may fail.
Concretely, even with an elimination ordering, the leading term of a Tate series is determined by first looking at the valuations of the coefficients, and so it is not enough to look at the leading term to determine whether the series involves the elimination variable or not.

We prove that for ideals spanned by polynomials, using Buchberger's algorithm with Mora reductions, this problem does not appear, and we are indeed able to eliminate variables.
This allows to recover all the usual ideal operations, and in particular proves that polynomial ideals are stable under intersection and radical.

Finally, we consider the theory of universal Gröbner bases, that is, sets which are a Gröbner basis for all monomial orderings.
This theory has proved useful in the classical setting, for instance leading to algorithms for change of ordering.
The key result is that any ideal in a polynomial algebra has a finite universal Gröbner basis.
This allows to see the set of Gröbner bases of the ideal as a polyhedral cone, and algorithms wandering on this cone have been developed (see \cite{BM88, Mora-Robbiano, Fukuda-Jensen, GrobnerWalk}).
Furthermore, connections with tropical geometry have been explored (see \cite{gfan,
 Computing_Tropical_Varieties}).

This latter aspect motivates the quest for a similar notion in the Tate setting.
It could pave the way for the computation of the tropical analytic
variety defined by a polynomial ideal (see \cite{Rabinoff}).
However, it is not clear whether all Tate ideals admit a finite universal Gröbner basis.
The last result of this work is a proof that polynomial ideals admit a finite universal analytic Gröbner basis, valid regardless of the choice of the convergence radii.

\section{Setting}

\subsection{Term orders, Tate algebras and ideals}
\label{subsec:Tate_alg}

In order to fix notations, we briefly recall the definition of Tate 
algebras and their theory of Gröbner bases (GB for short).
Let $K$ be a field with valuation $\val$
and let $\Kz$ be the subring of $K$ consisting of
elements of nonnegative valuation. Let $\pi$ be a uniformizer 
of $K$, that is an element of valuation $1$.
Let $\Ke \subset K$, be an exact field.
Typical examples of such a setting are $p$-adic fields like
 $K=\QQ_p$ with
$\Kz=\Z_p,$ $\pi = p$ and $\Ke = \QQ$
or Laurent series fields like $K=\QQ ((T))$ with
$\Kz=\QQ [[T]],$ $\pi = T$ and $\Ke = \QQ(T).$

Let $\r = (r_{1},\dots,r_{n}) \in \QQ^{n}$.
The \emph{Tate algebra} $\KX[\r]$ is defined by:
  \begin{equation}
    \label{eq:1}
    \KX[\r] := \left\{ \sum_{\mathbf{i} \in \NN^{n}} a_{\mathbf{i}}\X^{\mathbf{i}} 
     \text{ s.t. }
     a_{\mathbf{i}}\in K \text{ and } 
     \val(a_\i) - \r{\cdot}\i \xrightarrow[|\i| \rightarrow +\infty]{} +\infty
    \right\}
  \end{equation}
The tuple $\mathbf{r}$ is called the convergence log-radii of the 
Tate algebra.
We define the Gauss valuation of a term 
$a_{\mathbf{i}}\X^{\mathbf{i}}$ as 
$\val_\r(a_{\mathbf{i}}\X^{\mathbf{i}}) = \val(a_\i) - \r{\cdot}\i$, and 
the Gauss valuation of $\sum a_{\mathbf{i}}\mathbf{X}^{\mathbf{i}} 
\in \KX[\r]$ as the minimum of the Gauss valuations of its terms.
The integral Tate algebra ring $\KzX[\r]$ is the subring of $\KX[\r]$ 
consisting of elements with nonnegative valuation.

We fix once and for all a classical \emph{monomial order} $\leq_m$ on the set of
monomials $\X^{\mathbf{i}}$. 
Given two terms $a \X^\i$ and $b \X^{\mathbf{j}}$  (with $a,b \in 
K^\times$), we write $a \X^\i <_\r b \X^{\mathbf{j}}$ if
$\val_\r(a\X^\i) > \val_\r(b\X^{\mathbf{j}})$, or
$\val(a\X^\i) = \val(b\X^{\mathbf{j}})$ 
and $\X^{\mathbf{i}} <_m \X^{\mathbf{j}}$.
The leading term of a Tate series $f=\sum 
a_{\mathbf{i}}\mathbf{X}^{\mathbf{i}} \in \KX[\r]$ is, by
definition, its maximal term, and is denoted by
$\LT_\r (f).$ Its coefficient and its monomial
are denoted $\LC_\r(f)$ and $\LM_\r(f)$, with $\LT_\r (f)=\LC_\r(f)\times \LM_\r(f).$
For $f,g \in \KX[\r]$, we define their S-polynomial as
\[\spoly(f,g) = \frac{\LT_{\r}(g)}{\gcd(LT(f),\LT_{\r}(g))}f - 
\frac{\LT_{\r}(f)}{\gcd(\LT_{\r}(f),\LT_{\r}(g))}g. \]

A Gröbner basis (or GB for short) of an ideal $I$ of $\KX[\r]$ is,
by definition, a family $(g_1, \ldots, g_s)$ of elements of $I$
with the property that for all $f \in I$, there exists an index
$i \in \{1, \ldots, s\}$ such that $\LT_\r(g_i)$ divides $\LT_\r(f)$.
The following theorem is proved in~\cite{CVV}.

\begin{thm}
\label{theo:GB}
Any ideal of $\KX[\r]$ or $  \KzX[\r]$ admits a Gröbner basis.
\end{thm}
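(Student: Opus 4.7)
The plan is to extend the classical Dickson-based existence proof of Gröbner bases from polynomial rings to the Tate setting. The crucial enabling fact is that the convergence condition $\val(a_\i) - \r\cdot\i \to +\infty$ ensures that any nonzero element of $\KX[\r]$ has only finitely many terms of minimal Gauss valuation, so its leading term $\LT_\r(f)$ is well-defined as the $\leq_m$-largest among them. Given an ideal $I$, the strategy is then to exhibit finitely many leading terms of elements of $I$ that divide all others, and to lift them to a Gröbner basis in the usual way.

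For $\KX[\r]$, the argument essentially mirrors the polynomial case. Since $K$ is a field, every nonzero scalar is invertible, so divisibility of two terms reduces to componentwise divisibility of their monomial parts. The set $\{\LM_\r(f) : f \in I\setminus\{0\}\}$ is then an upward-closed subset of $\NN^n$, so by Dickson's lemma it has finitely many minimal elements; picking preimages $g_1,\ldots,g_s \in I$ gives the desired Gröbner basis.

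For $\KzX[\r]$ the situation is subtler, since $\Kz$ is not a field: a term $c_1 \X^{\i_1}$ divides $c_2 \X^{\i_2}$ in $\KzX[\r]$ if and only if $\X^{\i_1} \mid \X^{\i_2}$ and $\val_\r(c_1 \X^{\i_1}) \leq \val_\r(c_2 \X^{\i_2})$. Writing the log-radii $\r$ over a common denominator $e$, the Gauss valuations of nonzero terms of elements of $\KzX[\r]$ all lie in $\frac{1}{e}\NN$, so the set of leading-term classes embeds into $\NN^n \times \frac{1}{e}\NN$, which is order-isomorphic to $\NN^{n+1}$. Dickson's lemma then again provides finitely many minimal elements, and the corresponding $g_1,\ldots,g_s \in I$ form the desired Gröbner basis.

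The main technical point is identifying the correct notion of divisibility of leading terms in the integral setting and recognising the leading-term set as a subset of a well-quasi-ordered monoid; once this is done, the combinatorial finiteness supplied by Dickson's lemma concludes the proof exactly as in the classical case. A mild subtlety to address along the way is that the value group of $\val_\r$ is a priori only a subgroup of $\Q$, so one has to observe that clearing the denominator $e$ is harmless and preserves the divisibility order.
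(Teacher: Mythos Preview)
Your proposal is correct. The paper does not actually prove this theorem here: it simply cites \cite{CVV}, and the argument there is precisely the Dickson-type argument you outline (indeed, the paper repeatedly invokes ``Prop.~2.8 of \cite{CVV}'', which is exactly the Noetherianity of $\TTzX[\r]$ under divisibility that your embedding into $\NN^{n+1}$ establishes). Your identification of term divisibility in the two settings and the reduction to Dickson's lemma on $\NN^n$ (for $\KX[\r]$) and on $\NN^n \times \frac{1}{e}\NN \cong \NN^{n+1}$ (for $\KzX[\r]$) matches the approach of the cited reference.
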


We define the monoid of terms $\TX[\r]$ as the multiplicative monoid 
consisting of the elements $a \X^\alpha$ with $a \in K^\times$ and
$\alpha \in \NN^n$.
We let also $\TzX[\r]$ be the submonoid of $\TX[\r]$ consisting of terms $a 
\X^\mathbf{i}$ for which $\val_\r(a \X^\mathbf{i}) \geq 0$.
The multiplicative group $K^\times$ (resp. $(\Kz)^\times$) embeds 
into $\TX[\r]$ (resp. $\TzX[\r]$). We set:
$$\TTX[\r] = \TX[\r] / K^\times
\quad \text{and} \quad
\TTzX[\r] = \TzX[\r] / (\Kz)^\times.$$

We remark that $G$ is a GB of an ideal $I$ in $\KX[\r]$ 
(resp. $  \KzX[\r]$) if and only if
$\LT_\r(G)$ generates $\LT_\r(I)$ in $\TTX[\r]$ (resp. $\TTzX[\r]$).

\subsection{Polynomial and overconvergent ideals}
\label{sec:polyn-overc-ideals}

The main object of our studies is polynomials and overconvergent series, and the ideals they span.

\begin{defn}
  An ideal of $\KX[\r]$ is called a \emph{polynomial ideal} if it is spanned by polynomials in $K[\X]$.
  
  Let $\mathbf{s} \leq \r$ with respect to component-wise comparison: $\forall i \in \llbracket 1,n\rrbracket, s_i \leq r_i$.
  A series $f = \sum_{i} a_{\mathbf{i}}\mathbf{X}^{\mathbf{i}} \in \KX[\r]$ is called \emph{$\mathbf{s}$-convergent} (or simply \emph{overconvergent} if $\s$ is clear by the context) if
  \begin{equation}
    \label{eq:2}
    \val(a_{\mathbf{i}}) - \mathbf{s} \cdot \mathbf{i}
     \rightarrow_{ \vert i \vert  \to \infty} +\infty.
  \end{equation}
  Equivalently, it means that $f$ is the image of an element of $\KX[\mathbf{s}]$ under the canonical embedding.
  An ideal of $\KX[\r]$ is called \emph{$\mathbf{s}$-convergent}  if it is spanned by $\mathbf{s}$-convergent series.
\end{defn}

\begin{rmk}
  A polynomial ideal in $\KX[\r]$ contains more polynomials than the ideal taken in $K[\X]$.
  For example, let $\r=(0)$ and consider $f=X+pX^{2}$ in $\QQ_{p}[X]$.
  In $\QQ_{p}[X]$, the ideal spanned by $f$ is $\langle f \rangle = \langle X+pX^{2} \rangle$.

  On the other hand, in $\QQ_{p}\{X\}$, $1+pX$ is invertible with inverse $\sum_{i} (-p)^{i}X^{i}$, and the ideal contains $X = \left( \sum_{i} (-p)^{i}X^{i} \right)f$.
\end{rmk}

The following structural results are immediate.
\begin{prop}
  Let $\mathbf{s} \leq \r$.
  Let $I$ and $J$ be two polynomial (resp. $\mathbf{s}$-convergent) ideals in $\KX[\mathbf{r}]$.
  Then:
  \begin{enumerate}
    \item the sum $I+J$ is a polynomial (resp. $\mathbf{s}$-convergent) ideal;
    \item the product $IJ$ is a polynomial (resp. $\mathbf{s}$-convergent) ideal.
  \end{enumerate}
\end{prop}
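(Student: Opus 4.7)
The proposition is essentially a bookkeeping statement about generators, so the plan is to reduce both claims to closure properties of the classes of polynomials and of $\s$-convergent series under the ring operations needed.

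First I would recall that for any commutative ring $A$ and any ideals $I,J \subseteq A$, the sum $I+J$ is generated by the union of any generating sets of $I$ and $J$, and the product $IJ$ is generated by the pairwise products $fg$ where $f$ runs over a generating set of $I$ and $g$ over a generating set of $J$. Applying this with $A = \KX[\r]$ lets me reduce the statement to checking that the class of generators is closed under the relevant operation.

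For the polynomial case, choose polynomial generators $(f_i) \subseteq K[\X]$ of $I$ and $(g_j) \subseteq K[\X]$ of $J$. Then $(f_i) \cup (g_j)$ consists of polynomials and generates $I+J$, while $(f_i g_j)$ consists of polynomials (since $K[\X]$ is a ring) and generates $IJ$. For the $\s$-convergent case, the same argument works provided the class of $\s$-convergent series is stable under sums and products. Stability is immediate from the definition: $\s$-convergent elements of $\KX[\r]$ are exactly the image of $\KX[\s]$ under the canonical embedding $\KX[\s] \hookrightarrow \KX[\r]$, and $\KX[\s]$ is itself a $K$-algebra (a Tate algebra). Hence sums and products of $\s$-convergent series are again $\s$-convergent.

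There is no real obstacle here; the only point worth spelling out carefully is the identification of $\s$-convergent elements with the image of $\KX[\s]$, which is exactly what the definition asserts, so the ring structure of $\KX[\s]$ transfers directly. The proof can therefore be given in a couple of lines by combining the generator descriptions of $I+J$ and $IJ$ with the fact that $K[\X]$ and $\KX[\s]$ are subrings of $\KX[\r]$.
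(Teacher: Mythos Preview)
Your argument is correct and is exactly the obvious verification the paper has in mind; the paper itself simply declares the proposition ``immediate'' without writing anything further. Your expansion via generating sets and the observation that $K[\X]$ and $\KX[\s]$ are subrings of $\KX[\r]$ is the natural way to unpack that one word.
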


On the other hand, closure under elimination is not obvious, and therefore closure under intersection or saturation is not immediate.
For intersection, it can be proved using that,
as a completion of a Noetherian
ring, 
$\KX[\r]$ is flat over $\KX[\s]$.
Using Gröbner bases,
we present in Section~\ref{sec:appl-ideal-oper}
a constructive proof for polynomial ideals.


  

\section{Polynomial ideals: tools and motivations}

Using elimination, 
we motivate our results
with the closure
of polynomial ideals
under some ideal operations
including intersection 
and saturation.

\subsection{Elimination of one variable}

Let $A = K\{x_1,\dots,x_n;r_1,\dots,r_n\}$
be a Tate algebra with tie-breaking monomial ordering $\leq_{m_A}.$
Let $B = K\{t,x_1,\dots,x_n;r_0,r_1,\dots,r_n\}$ be a Tate algebra
above $A.$
Let $I$ be an ideal of $B.$
We would like to compute a GB of the ideal $I \cap A$
in $A$ (for the  monomial ordering $\leq_{m_A}$).

\begin{prop}
If $r_0 = +\infty$ and $\leq_{m_A}$ is a block-monomial ordering
with $t$ bigger than any monomial not involving $t,$
if $G_B$ is a GB of $I$ for the term ordering defined by the $r_i$'s
and $\leq_{m_A}$, then $G_B \cap A$ is a GB of $I \cap A.$
\end{prop}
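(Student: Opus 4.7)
The plan is to adapt the classical elimination theorem for Gröbner bases to the Tate setting. The easy inclusion $G_B \cap A \subseteq I \cap A$ is immediate from $G_B \subseteq I$. The nontrivial direction is the Gröbner basis property: for every $f \in I \cap A$, I need to exhibit a $g \in G_B \cap A$ whose leading term in $A$ (for the data $\s := (r_1,\dots,r_n)$ and $\leq_{m_A}$ restricted to $x$-monomials) divides $\LT_\s(f)$. I would start by using that $G_B$ is a GB of $I$ in $B$ to pick some $g \in G_B$ such that $\LT_\r(g)$ divides $\LT_\r(f)$, with $\r = (+\infty, r_1, \dots, r_n)$, and then argue that such a $g$ must in fact lie in $A$.

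The crux is the following implication: if $\LT_\r(g)$ does not involve $t$, then $g$ itself does not involve $t$. I would prove its contrapositive. Write $g = \sum_{k=0}^{d} g_k t^k$ with $g_k \in A$ and $g_d \neq 0$, and suppose $d \geq 1$. Any term of $g_d t^d$ has the form $c\, t^d \X^\alpha$, and its Gauss valuation relative to $\r$ equals $\val(c) - d\, r_0 - \s\cdot\alpha$, which is $-\infty$ by the hypothesis $r_0 = +\infty$. Meanwhile every term of $g_0$ has finite Gauss valuation. Since smaller Gauss valuation means larger in $<_\r$, every term of $g_d t^d$ strictly dominates every term of $g_0$, and the block property of $\leq_{m_A}$ ensures that ties amongst $t$-involving terms (which all have Gauss valuation $-\infty$) are broken within that top block. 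Consequently $\LT_\r(g)$ cannot be $t$-free, and the contrapositive yields $d = 0$, i.e.\ $g \in A$.

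Once $g$ lies in $A$, its leading term computed in $B$ coincides with the one computed in $A$: the $r_0$ contribution vanishes for $t$-free terms, and the block ordering $\leq_{m_A}$ agrees with the $A$-monomial order on $t$-free monomials. The same holds for $f$. Hence $\LT_\s(g) = \LT_\r(g)$ divides $\LT_\r(f) = \LT_\s(f)$ in $A$, and since $f \in I \cap A$ was arbitrary this establishes that $G_B \cap A$ is a Gröbner basis of $I \cap A$.

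The main obstacle I anticipate is conceptual rather than technical: reconciling the Tate leading-term convention (valuation first, then monomial order) with the classical elimination mechanism. In the purely polynomial theory one only invokes the monomial order, but here the hypothesis $r_0 = +\infty$ is really what turns the valuation criterion into an elimination criterion, by forcing every $t$-involving term to have Gauss valuation $-\infty$ and hence to dominate the $t$-free block automatically. A cleaner mental model is to interpret $r_0 = +\infty$ as identifying $B$ with $A[t]$, in which case the proof reduces almost verbatim to the classical elimination theorem with $\leq_{m_A}$ playing its usual role.
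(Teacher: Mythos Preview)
Your proof is correct and follows essentially the same route as the paper: both hinge on the key observation that, under the hypotheses $r_0 = +\infty$ and the block ordering, a series whose leading term is $t$-free must itself be $t$-free. The only cosmetic difference is that the paper concludes by running the division of $f$ by $G_B$ and noting it stays inside $A$ and reaches~$0$, whereas you invoke the leading-term divisibility definition of a Gröbner basis directly; these are equivalent formulations of the same argument.
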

\begin{proof}
Firstly, $\left\langle G_B \cap A \right\rangle_A \subset I \cap A.$ 
Now, let us remark that 
if $g \in B$ is such that $\LT(g) \in A$ then $g \in A.$
Indeed,
as $r_0 = + \infty$ and $\leq_{m_A}$ is a block-monomial ordering,
any term $c x^\alpha$ involving $t$ is such that $t>\LT(g)$
so $g$ does not have any term involving $t$.  

As a consequence of this fact,
if $f \in I \cap A$ is divided by $G_B$
then only elements of $G_B$ in $A$, with $\LT$'s in $A$
will be involved, and as $G_B$ is a GB of $I$,
$f$ is reduced to $0.$
Consequently, the same division to $0$ happens for the division of $
f$ by $G_B \cap A$, so $I \cap A \subset \left\langle G_B \cap A \right\rangle_A ,$ which concludes the proof.
\end{proof}

\begin{cor}
For $r_0$ big enough, the previous result for $G_B \cap A$
is preserved, allowing to fit into the framework of algorithms developped
in \cite{CVV, CVV2, CVV3} and in this article.
\end{cor}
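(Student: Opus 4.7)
The plan is to reduce to the previous proposition by replacing $r_0 = +\infty$ with a finite rational $r_0$ chosen large enough to reproduce the decisive property used in its proof: any term involving $t$ in a Gröbner basis element must dominate every term of that element not involving $t$ in the $<_\r$ ordering. The main leverage comes from the results of this article, which guarantee that the polynomial ideal $I$ of $B$ admits a polynomial Gröbner basis $G_B$, finite and with finite support.

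For a polynomial $g \in G_B$, each term $c\, t^k \X^\beta$ with $k \geq 1$ has Gauss valuation $\val(c) - r_0 k - (r_1,\dots,r_n){\cdot}\beta$, while a term $c' \X^\gamma$ not involving $t$ has Gauss valuation $\val(c') - (r_1,\dots,r_n){\cdot}\gamma$. Since $G_B$ is finite and each $g$ has finitely many terms, there are only finitely many such pairs to control, so one can fix a rational $r_0$ large enough that, for every $g \in G_B$, each term of $g$ involving $t$ has strictly smaller Gauss valuation, hence is strictly greater in $<_\r$, than every term of $g$ not involving $t$. With this $r_0$, if $\LT_\r(g) \in A$ then $g$ contains no term involving $t$ (otherwise that term would exceed $\LT_\r(g)$), so $g \in A$; the argument of the previous proposition then applies verbatim.

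The main obstacle is a mild circularity: $G_B$ is needed to specify $r_0$, yet the algorithms of \cite{CVV, CVV2, CVV3} require $r_0$ as an input. This is resolved by observing that for $r_0$ large enough, the restriction of $<_\r$ to monomials of bounded degree stabilizes to the block ordering of the proposition; consequently an a priori bound on $r_0$ can be derived from the input generators together with effective degree and valuation bounds on the output of the Mora--Buchberger algorithm of this article, and any sufficiently large rational $r_0$ then fits simultaneously into the framework of the cited algorithms and into the proof above.
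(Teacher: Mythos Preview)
The paper states this corollary without proof; it serves as motivation in Section~3, with the enabling technical results appearing only later. So there is nothing to compare against directly, and your proof must be judged on its own.

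Your first two paragraphs capture exactly the right mechanism: once one knows (from Corollary~\ref{corr:finite_GB_made_of_pol}) that the polynomial ideal $I$ admits a Gr\"obner basis $G_B$ consisting of finitely many polynomials, a finite $r_0$ can be chosen so large that, in every $g\in G_B$, each term involving $t$ has strictly smaller Gauss valuation than every term not involving $t$. The conclusion that $\LT_\r(g)\in A$ forces $g\in A$ then goes through, and the proof of the previous proposition applies unchanged. This is correct and is the intended idea.

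Your diagnosis of the circularity is also right: the algorithms of the paper take $r_0$ as input, yet the threshold for $r_0$ seems to depend on $G_B$. Your resolution via ``effective degree and valuation bounds on the output of the Mora--Buchberger algorithm'' is plausible but the paper does not establish such bounds, so this step is not fully justified as written. Two cleaner ways to close the gap, both available from the paper, are:
\begin{itemize}
\item Invoke the universal analytic Gr\"obner basis of Theorem~\ref{cor:universal_GB}: it furnishes a single finite polynomial set $G\subset I$ that is a GB of $I$ for \emph{every} choice of $r_0\in\Q$. Since $G$ is fixed, one chooses $r_0$ large relative to $G$ and the circularity disappears.
\item Alternatively, run Algorithm~\ref{algo: Buchberger with WNF} with the block ordering of the proposition (the formal $r_0=+\infty$ case). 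It terminates, producing a finite polynomial $G_B$; all comparisons made during that run involve only finitely many terms, so there is a rational threshold $R$ such that for every $r_0>R$ the order $<_{(r_0,r_1,\dots,r_n)}$ agrees with the block order on those terms. Running the algorithm with any such $r_0$ then makes the same choices step by step and outputs the same $G_B$, which now lives in the finite-$r_0$ framework.
\end{itemize}
Either of these makes your third paragraph rigorous without appealing to unproved complexity bounds.
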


\subsection{Application to ideal operations}
\label{sec:appl-ideal-oper}
Following $\S 4$ of \cite{Cox15},
if $I$ and $J$ are 
ideals 
then GBs of $I \cap J$, $I:J$ and $I:J^\infty$
can be computed using
elimination (\textit{e.g.} 
$I \cap J = \left( tI+(1-t)J \right) \cap \KX$).

One motivation for our work
is Corollary \ref{corr:finite_GB_made_of_pol},
stating that any
polynomial ideal in $\KX[\r]$
admits a GB made of polynomials.
It implies the following
stability result
on polynomial ideals: if $I$ and $J$ are
ideals in $\KX[\r]$ generated
by polynomials,
then so are: $I+J$, $IJ$, $I \cap J$, $I:J$ and $I:J^\infty.$
%
%
%


\subsection{Homogenization and dehomogenization}

We will rely on (de)-homogenization at some point in
the computations.
We consign here notations and basic properties.

\begin{defn}
Let $(\cdot)^*$ and $(\cdot)_*$ be the homogenization
and dehomogenization applications between
$K[\X]$ and $K[\X,t].$
If $I$ is an ideal in $K[\X],$
we define $I^*$ to be the homogenization
of this ideal in $K[\X,t].$
\end{defn}

Given $\r \in \QQ^n$, we extend the term order
$<_\r$ to $K[\X,t]$ and $K \left\lbrace \X,t ; \r,0 \right\rbrace$

\begin{defn}
Given two terms $a \X^\alpha t^u$
and $b X^\beta t^v$, we write that
$a \X^\alpha t^u <_{\r,0} b X^\beta t^v$
if:
\begin{itemize}
\item $\val_\r(a\X^\i) > \val_\r(b\X^{\mathbf{j}})$ (which is the same as $\val_{\r,0}(a \X^\alpha t^u) > \val_{\r,0}(b X^\beta t^v)$).
\item $\val_\r(a\X^\i) = \val_\r(b\X^{\mathbf{j}})$ 
and $\deg (\X^\alpha t^u) < \deg (X^\beta t^v)$.
\item $\val_\r(a\X^\i) = \val_\r(b\X^{\mathbf{j}}),$ $\deg (\X^\alpha t^u) = \deg (X^\beta t^v)$
and $\X^{\alpha} <_m \X^{\beta}.$
\end{itemize}
This defines a term order on $K \left\lbrace \X,t ; \r,0 \right\rbrace$. \label{defn:term_order_on_homogenization}
\end{defn}

This order is defined such that dehomogenization preserves leading
terms of homogeneous polynomials of $K[\X,t].$

\begin{lem}
Let $\r \in \Q^n.$
Let $h \in K[\X,t]$ be a homogeneous polynomial.
Then $\LT_{(\r,0)}(h)_*=\LT_\r(h_*).$
Let $f \in K[\X],$
then $\LT_\r(f)=(\LT_{(\r,0)}(f^*))_*.$
\label{lem:dehomogenization_and_LT}
\end{lem}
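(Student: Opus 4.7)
The plan is to reduce both claims to a careful comparison between the term orders $<_{(\r,0)}$ and $<_\r$ on terms whose $t$-exponent is fully determined by the $\X$-exponent, which is what happens for terms of a homogeneous polynomial.

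First I would observe that, for any term $a \X^\alpha t^u$, one has $\val_{(\r,0)}(a \X^\alpha t^u) = \val(a) - \r \cdot \alpha - 0 \cdot u = \val_\r(a \X^\alpha) = \val_\r((a\X^\alpha t^u)_*)$; that is, the Gauss valuation in $K\{\X,t;\r,0\}$ is insensitive to the $t$-component and coincides with the Gauss valuation of the dehomogenization in $K\{\X;\r\}$. Moreover, $(\cdot)_*$ is injective on terms, and when restricted to the terms of a fixed homogeneous polynomial $h$ of degree $d$, it is a bijection onto the terms of $h_*$, since from $(a\X^\alpha)$ one recovers the unique antecedent $a\X^\alpha t^{d-|\alpha|}$.

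Next I would compare two terms $\tau = a\X^\alpha t^u$ and $\tau' = b\X^\beta t^v$ appearing in $h$ (so $|\alpha| + u = |\beta| + v = d$) under $<_{(\r,0)}$, as given by Definition~\ref{defn:term_order_on_homogenization}. The first criterion is a comparison of Gauss valuations, which by the previous paragraph is exactly the comparison of $\val_\r(\tau_*)$ and $\val_\r(\tau'_*)$. The second criterion, on total degree, is automatically an equality because $h$ is homogeneous. The third criterion, comparison via $<_m$, reduces to $\X^\alpha <_m \X^\beta$, which is precisely the tie-breaking rule of $<_\r$ on $\tau_*$ and $\tau'_*$. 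Hence the restriction of $<_{(\r,0)}$ to terms of $h$ corresponds, under $(\cdot)_*$, exactly to $<_\r$ on terms of $h_*$. Taking maxima on both sides yields $\LT_{(\r,0)}(h)_* = \LT_\r(h_*)$, which proves the first claim.

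For the second claim, note that for $f \in K[\X]$ the homogenization $f^*$ is a homogeneous polynomial in $K[\X,t]$ satisfying $(f^*)_* = f$. Applying the first claim to $h = f^*$ gives $\LT_{(\r,0)}(f^*)_* = \LT_\r((f^*)_*) = \LT_\r(f)$, which is the desired identity. There is no real obstacle here: the whole point of the definition of $<_{(\r,0)}$ with the degree tie-break and $r_0 = 0$ is precisely to make the dehomogenization map respect leading terms of homogeneous polynomials, and the verification is essentially a direct reading of Definition~\ref{defn:term_order_on_homogenization}.
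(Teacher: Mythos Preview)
Your proof is correct and follows essentially the same approach as the paper: both argue that for terms of equal total degree the order $<_{(\r,0)}$ corresponds under dehomogenization to $<_\r$, and then deduce the second claim from the first via $h=f^*$ and $(f^*)_*=f$. Your version is simply more explicit in unpacking the three clauses of Definition~\ref{defn:term_order_on_homogenization}; one small imprecision is the phrase ``$(\cdot)_*$ is injective on terms'' (it is not, since $t^u$ collapses), but you immediately state and use the correct fact that it is a bijection on the terms of a fixed homogeneous polynomial, so the argument is unaffected.
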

\begin{proof}
Thanks to the way we defined the term order on
$K \left\lbrace \X,t ; \r,0 \right\rbrace$
in Definition \ref{defn:term_order_on_homogenization},
if $c_\alpha x^\alpha t^u$ and $c_\beta x^\beta t^v$
are two terms of the same total degree such that
$c_\alpha x^\alpha t^u >_{\r,0} c_\beta x^\beta t^v,$
then $c_\alpha x^\alpha  >_{\r} c_\beta x^\beta.$
This is enough for the first part.
For the second part, we can conclude
using $h=f^*$ and the fact that $f=(f^*)_*.$ 
\end{proof}

\section{Weak Normal Forms}

\subsection{Definitions}
\label{sec:definition}

We present here how to adapt Mora's tangent cone algorithm
to compute Weak Normal Forms over Tate algebras.
The main consequence of this notion is that it will allow us, if the generating Tate series are polynomials,
to do all computations on polynomials, avoiding any infinite division.

In this section, we fix
some $\r \in \QQ^n$.
First, we recall the definition of weak normal forms, adapted to the framework of polynomial ideals in Tate algebras.
\begin{defn}
  A \emph{weak normal form} is a map $\WNF: K[\X] \times \mathcal{P}(K[\X]) \to K[\X]$, such that, for all $f \in K[\X]$ and all $G \subseteq K[\X]$, the following holds:
  \begin{enumerate}
    \item $\WNF(0,G) = 0$
    \item If $\WNF(f,G) \neq 0$, then $\LT(\WNF(f,G))$ does not lie in the ideal spanned by the leading terms of $G$
    \item If $f \neq 0$, then there exists $u \in K[\X]$ invertible in $\KX[\r]$ such that $uf - \WNF(f,G)=\sum_{g \in G} u_g g$
    with the $u_g$'s polynomials, $\LT_\r(u_g g) \leq \LT_\r(f)$
    with equality attained at most once.\footnote{This is sometimes called a \textit{strong Gröbner representation} of $f$ by $G.$} 
  \end{enumerate}
\end{defn}

In particular, if $\WNF(f,G)=0$, then $f$ lies in the ideal spanned by $G$.
And if $G$ is a Gröbner basis, it is an equivalence.

\subsection{\'Ecarts}

The first step in order to devise a new version of 
Mora's tangent cone algorithm is to provide a suitable
écart function on polynomials. This function
then drives the division algorithm.
To do so, we adapt the écart functions from \cite{Mora} and \cite{CM} 
to fit into the Tate algebra framework (see
also \cite{SingularIntro} for a general background on standard bases computations).

\begin{defn}
For $f$ a polynomial, we define:
\[\Ecart_1 (f) :=\deg( f) - \deg(\LM_\r(f)).\]
\end{defn}

\begin{defn}
For $h = \sum_u b_u x^u$
and $g= \sum_u c_u x^u$ two polynomials, we define:
\[\Ecart_2(h,g):=\card(\{u : b_u = 0, c_u \neq 0 \}).\]
\end{defn}


\subsection{Mora's Weak Normal Form algorithm}

We first present a simple version of 
Mora's algorithm to compute a Weak Normal Form (WNF) of a polynomial modulo a finite set of polynomials.
It differs from the multivariate division algorithm
by adding intermediate reduced polynomials 
to the list of divisors, which
induces a division which happens, not on the original
divided polynomial, but on one of his multiples by an invertible
polynomial (which does not modify the $\LT_\r$'s).

\begin{algorithm}[H]
	\caption{$\WNF(f,g)$, Mora's Weak Normal Form algorithm}
	\label{algo: Mora's WNF}
	\begin{algorithmic}[1]
		\REQUIRE $f,g_1,\dots,g_s \in K[\X]$
		\ENSURE $h \in K[\X]$ such that for some $\mu,u_1,\dots,u_s \in K[\X],$  $\mu f=\sum u_i g_i +r,$ \\
		$\mu$ is a polynomial such that $\val_\r (\mu- 1)>0,$
					and when $h \neq 0,$ $\LT_\r(h)$ is divisible by no $\LT_\r(g_i)$'s. \\
		Moreover, $\LT_\r (u_i g_i) \leq \LT_\r(f).$	
						
		\STATE	$h:=f$ ; 
		\STATE  $T:=(g_1,\dots,g_s)$ ;
		\WHILE{ $h \neq 0$ and $T_h := \{g \in T, \LT_\r(g) \mid \LT_\r(h) \} \neq \emptyset$}	
		\STATE choose $g \in T_h$ minimizing first 
		$\Ecart_1(g)$ then
		$\Ecart_2\left(h,\frac{\LM_\r(h)}{\LM_\r(g)}g\right)$ ;
		\IF{$\Ecart_1(g) >\Ecart_1(h),$
		or  $\Ecart_2(h,\frac{\LM_\r(h)}{\LM_\r(g)}g)>0$}
		\STATE $T:=T \cup \{h\}$;
		\ENDIF
		\STATE $h:=\spoly(h,g)$ ;
		\ENDWHILE
		\RETURN $h$ ;
	\end{algorithmic}
\end{algorithm}

We may remark that if $\mu \in K[\X]$
is such that $\val_\r(\mu-1)>0$, then
$\mu$ is invertible in $\KX[\r].$

\subsection{Termination}


\begin{lem}
Algorithm \ref{algo: Mora's WNF} terminates.
\label{lem:WNFalgo_terminates}
\end{lem}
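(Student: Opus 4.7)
My plan is to adapt the classical termination argument for Mora's tangent cone algorithm (\cite{Mora}, see also \cite{SingularIntro}) to the Tate setting. The core observation is that at every iteration $h$ is replaced by $\spoly(h,g)$; since $g \in T_h$ means $\LT_\r(g)$ divides $\LT_\r(h)$, this S-polynomial cancels $\LT_\r(h)$, so $\LT_\r(\spoly(h,g)) <_\r \LT_\r(h)$. The auxiliary set $T$ may grow by the inclusion of the current $h$, but only when one of the écart trigger conditions fires. The crux of the proof is to show that $T$ can grow in an essential way only finitely often, after which the loop behaves like a classical division and must terminate.

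To bound the essential growth of $T$, I would consider the ascending chain of monomial ideals generated by $\{\LM_\r(g) : g \in T_i\}$ inside $\TTX[\r]$. By Dickson's lemma this chain stabilizes, say at some index $N$. For $i \geq N$, any further inclusion $h \in T$ has leading monomial already covered by $T_N$, and the trigger conditions then force a strict lexicographic decrease of the pair $(\Ecart_1(g), \Ecart_2(h, \tfrac{\LM_\r(h)}{\LM_\r(g)}g))$ among the best available divisor for each fixed leading-monomial class: in the first trigger case the new element $h$ provides a strictly smaller $\Ecart_1$ than the previously chosen $g$, and in the second it contributes an element with $\Ecart_2 = 0$ for itself. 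Since both écarts take non-negative integer values, this can happen only finitely many times.

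The main obstacle is that $<_\r$ is valuation-first and hence not a well-ordering on arbitrary terms of $\KX[\r]$: a priori, an infinite strictly decreasing sequence $\LT_\r(h_1) >_\r \LT_\r(h_2) >_\r \cdots$ could exist. I would resolve this by exploiting that the $h_i$ remain polynomials in $K[\X]$ throughout the loop: their supports stay inside a finitely generated submonoid of $\NN^n$ built from $\Supp(f)$ and the supports of the finitely many divisors encountered so far, and the S-polynomial construction controls the coefficient valuations from below. This confines the sequence $\LT_\r(h_i)$ to a well-founded portion of $<_\r$, so the monomial Dickson-type argument above indeed suffices to conclude termination.
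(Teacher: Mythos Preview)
Your outline has a genuine gap: the Dickson argument you run on the ideals $\langle \LM_\r(g) : g\in T_i\rangle$ in $\TTX[\r]$ only stabilizes the \emph{leading monomials}, and from that stabilization you cannot deduce any control on $\Ecart_1$. After your index $N$, when the first trigger fires and $h$ is appended, all you know is that $\LM_\r(h)$ is divisible by some $\LM_\r(g)$ already in $T$; you do \emph{not} know that this particular $g$ (or any $g\in T_h$) has $\Ecart_1(g)\le \Ecart_1(h)$. Consequently the degree of $h$ may keep increasing, and the ``lexicographic decrease for each fixed leading-monomial class'' collapses because the classes themselves are not finite in number. The paper avoids this by running the Noetherianity argument one level up, on the \emph{extended} leading terms $\LTE(f)=\LT_\r(f)\cdot t^{\Ecart_1(f)}$ in $K[\X,t]$; stabilization of $\langle \LTE(T)\rangle$ is exactly what forces, from some point on, the selected reducer to satisfy $\Ecart_1(g)\le\Ecart_1(h)$, hence a uniform bound $d$ on $\deg(h)$.

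Your last paragraph is where the missing degree bound bites. The assertion that ``supports stay inside a finitely generated submonoid'' and that ``the S-polynomial construction controls the coefficient valuations from below'' does not give well-foundedness of $<_\r$ on the terms that actually occur: in $\spoly(h,g)$ the multiplier $\LM_\r(h)/\LM_\r(g)$ can have arbitrarily large degree, and $\val_\r(\LT_\r(h_m))$ may drift to $+\infty$ (cf.\ the footnote example $X$ reduced by $X-pX^2$). Once the degree bound $d$ is in place the paper finishes with a second, finitary step: only finitely many supports of degree $\le d$ are possible, so eventually the chosen reducer satisfies $\Ecart_2=0$, and then $\Supp(h_{m+1})\subsetneq \Supp(h_m)$ forces termination. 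Your sketch does not isolate either of these two phases, and without the $\LTE$ trick the argument does not close.
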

\begin{proof}
Let us define
the extended leading
terms (with respect
to $\Ecart_1$) as:
$\LTE \: : \:  K[\X] \rightarrow K[\X,t]$
with $LTE(f)= \LT_\r(f) \times t^{\Ecart_1(f)}.$

Let us assume the 
algorithm does not terminate. It means that 
$T_h$ is never empty.
From Prop 2.8 of \cite{CVV}, there exists some $N$ such that 
$\LTE(T^{(v)})$ is stable for $v \geq N.$

For $v \geq N,$
when $h_v$ is processed,
two possibilities
can occur.
If it is not added 
to $T$ on Line $6$,
it means that 
for the selected reducer $g,$
$\Ecart_1(g) \leq \Ecart_1(h_v).$
If it is added, then
$\LTE(h_v)$ is in 
$\LTE(T^{(v)})$
and hence, there is some
$g \in T^{(v)}$
such that $\LTE(g) \mid \LTE(h_v).$
It means that $\Ecart_1(g) \leq \Ecart_1(h_v)$
and $\LT_\r(g) \mid \LT_\r(h_v).$

Thus, in both cases,
the $g$ selected in Line $4$
has to be such that $\Ecart_1(g) \leq \Ecart_1(h_v).$
In consequence, starting from $v \geq N,$
$\deg(h_v)$
can not increase,
and is upper-bounded by $d.$

Thereafter, the 
amount of 
$\LM_\r(h_j)$'s and
$x^\alpha \LM_\r(g)$'s for $g \in T$ and $\deg (x^\alpha \LM_\r(g))\leq d$
is finite.
Moreover, for 
the polynomials reaching
such an $\LM_\r$,
only a finite amount
of supports are possible.

Therefore, there is some
$N_2>N$ such that
after the $N_2$-th term,
$T$ will not gain
any new support for its
polynomials nor their monomial
multiples of degree $\leq d.$
Then, for $v \geq N_2$,
the minimal $\Ecart_2$
is $0.$ Indeed,
if it is not $0$, then
$h_v$ is added to $T.$
But as $v \geq N_2$,
there is a $g$ with $\LT_\r(g) \mid \LT_\r(h_v)$
and $Support(\frac{\LM_\r(h)}{\LM_\r(g)}g)=Support(h_v),$
and thus $\Ecart_2(\frac{\LM_\r(h)}{\LM_\r(g)}g,h_v)=0,$
which is a contradiction.

Hence, for $m \geq N_2$ necessarily, it means that $\Supp(h_{m+1}) \subsetneq \Supp(h_{m})$
(the leading term of $h_m$ being canceled).
Since the size of the support cannot decrease indefinitely,
the algorithm must terminate.
\end{proof}

\subsection{Correctness}

In order to prove correctness, we extend the algorithm so that the production
of the cofactors is explicit (Algorithm~\ref{algo: Mora's WNF with cofactors}).

\begin{algorithm}
	\caption{Mora's Weak Normal Form algorithm with cofactors}
	\label{algo: Mora's WNF with cofactors}
	\begin{algorithmic}[1]
		\REQUIRE $f,g_1,\dots,g_s \in K[\X]$
		\ENSURE $\mu,u_1,\dots,u_s,h \in K[\X]$ such that  $\mu f=\sum u_i g_i +h$ \\
		when $h \neq 0,$ $\LT_\r(h)$ is divisible by no $\LT_\r(g_i)$'s and $\mu$ is a polynomial such that $\val_\r(\mu-1)>0.$

		\STATE	$h_0:=f,\mu_0=1, u_{1,0}=\dots=u_{s,0}=0,j=0$ ;
		\STATE  $T:=(g_1,\dots,g_s)$ ;
		\WHILE{ $h_j \neq 0$ and $T_{h_j} := \{g \in T, \LT_\r(g) \mid \LT_\r(h_j) \} \neq \emptyset$}	
		\STATE choose $g \in T_{h_j}$ minimizing first 
		$\Ecart_1(g)$ then
		$\Ecart_2\left(h_j,\frac{\LM_\r(h_j)}{\LM_\r(g)}g\right)$ ;
		\IF{$\Ecart_1(g) >\Ecart_1(h),$
		or $\Ecart_2(h_j,\frac{\LM_\r(h)}{\LM_\r(g)}g)>0$}
		\STATE $T:=T \cup \{h_j\}$;
		\ENDIF
		\STATE $x^v := \LM_\r(h_j)/\LM_\r(g),$ $c_v := LC_\r(h_j)/LC_\r(g)$ ;
		\STATE $h_{j+1}:=\spoly(h_j,g)$ \textit{i.e.} $h_{j+1}:=h_j-c_v x^v g $ ;
		\IF{$g=g_m$ for some $1 \leq m \leq s$}
		\STATE $u_{m,j+1}:=u_{m,j}+c_vx^v, u_{i,j+1}=u_{i,j}$ for $i \neq m$, $ \mu_{j+1}:=\mu_j$ ;
		\ELSE
		\STATE $g$ was added to $T$ at some previous iteration of the
		algorithm, so $g=h_m$ for some $m<j$ ;
		\STATE $\mu_{j+1}:=\mu_j-c_vx^v\mu_m,$ and for all $i\leq s$, $ u_{i,j+1}:=u_{i,j}-c_vx^v u_{i,m}$; 
		\ENDIF

		\STATE $j:=j+1$ ;
		\ENDWHILE
		\RETURN $\mu_{j}, u_{1,j},\dots,u_{s,j},h_j$ ;
	\end{algorithmic}
\end{algorithm}

Correctness then comes from the following loop invariant:

\begin{lem}
  \label{lem:correctness}
For any $j \geq 0$,
\begin{enumerate}
\item $\mu_j f=h_j+\sum_i u_{i,j} g_i,$
\item $\val_\r (\mu_j -1)>0,$ 
\item $\LT_\r(u_{i,j} g_i) \leq \LT_\r(f),$ with equality attained at most once, and if so, always with the same $i$ for all $j$;
\item $\LT_\r(h_{j+1}) < \LT_\r(h_{j}).$ 
\end{enumerate}
\end{lem}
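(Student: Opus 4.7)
The plan is to prove all four invariants simultaneously by induction on $j$. The base case $j=0$ is immediate from the initialization: $\mu_0 = 1$ gives $\val_\r(\mu_0 - 1) = +\infty$, all $u_{i,0}$ vanish so (3) holds vacuously, and $1 \cdot f = f + 0$ is exactly (1). Invariant (4) at step $j$ can be dispatched uniformly: since $c_v$ and $x^v$ are defined so that $c_v x^v \LT_\r(g) = \LT_\r(h_j)$, the leading terms cancel in $h_{j+1} = h_j - c_v x^v g$, whence $\LT_\r(h_{j+1}) <_\r \LT_\r(h_j)$. Iterating this gives $\LT_\r(h_j) <_\r \LT_\r(f)$ for every $j \geq 1$.

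For the inductive step I split according to whether the chosen reducer $g$ is an original generator ($g = g_k$) or a previously stored polynomial $g = h_m$ with $m < j$. In the first sub-case $\mu_{j+1} = \mu_j$ and only $u_{k,j+1} = u_{k,j} + c_v x^v$ changes, so invariants (1) and (2) pass straight through. In the second sub-case, to prove (1) I substitute the inductive identity $\mu_m f = h_m + \sum_i u_{i,m} g_i$ into $h_{j+1} = h_j - c_v x^v h_m$ and rearrange using the updates $\mu_{j+1} = \mu_j - c_v x^v \mu_m$ and $u_{i,j+1} = u_{i,j} - c_v x^v u_{i,m}$.

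The crucial point is (2) when $g = h_m$: we have $\mu_{j+1} - 1 = (\mu_j - 1) - c_v x^v \mu_m$, and since $\val_\r(\mu_m - 1) > 0$ forces $\val_\r(\mu_m) = 0$, everything reduces to showing $\val_\r(c_v x^v) > 0$. Chaining (4) from step $m$ up to step $j$ yields the strict inequality $\LT_\r(h_j) <_\r \LT_\r(h_m)$. By the very definition of $<_\r$, either $\val_\r(\LT_\r(h_j)) > \val_\r(\LT_\r(h_m))$---in which case $\val_\r(c_v x^v) > 0$ follows at once from $c_v x^v \LT_\r(h_m) = \LT_\r(h_j)$---or the two valuations coincide and $\LM_\r(h_j) <_m \LM_\r(h_m)$; but the latter alternative is incompatible with the divisibility $\LM_\r(h_m) \mid \LM_\r(h_j)$, because every multiple of $\LM_\r(h_m)$ is at least $\LM_\r(h_m)$ in any admissible monomial order. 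I expect this term comparison to be the main obstacle of the proof.

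Claim (3) then follows by the same dichotomy. When $g = g_k$, only $u_{k,j+1}$ is modified and the added term $c_v x^v g_k$ has leading term $\LT_\r(h_j) \leq_\r \LT_\r(f)$, with equality attained only at $j = 0$; this single iteration is the unique one that can create an equality, and it pins down the distinguished index $i_\star = k$. When $g = h_m$, the subtracted term $c_v x^v u_{i,m} g_i$ has leading term strictly below $\LT_\r(u_{i,m} g_i) \leq_\r \LT_\r(f)$ thanks to $\val_\r(c_v x^v) > 0$, so no new equality can appear and the identity of $i_\star$ is preserved. Combining the two sub-cases yields (3) at step $j+1$, completing the induction.
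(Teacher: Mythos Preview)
Your proposal is correct and follows essentially the same inductive structure as the paper's proof: base case, uniform treatment of item~(4), then a case split on whether the reducer is an original $g_k$ or a stored $h_m$, with the key step being $\val_\r(c_v x^v)>0$ derived from $\LT_\r(h_j)<_\r\LT_\r(h_m)$. Your argument is in fact more explicit than the paper's on precisely this point: the paper simply asserts that $\LT_\r(h_j)<_\r\LT_\r(g)$ together with $\LT_\r(c_v x^v g)=\LT_\r(h_j)$ ``implies'' $\val_\r(c_v x^v)>0$, whereas you carry out the dichotomy on the definition of $<_\r$ and rule out the equal-valuation branch via the incompatibility of $\LM_\r(h_j)<_m\LM_\r(h_m)$ with $\LM_\r(h_m)\mid\LM_\r(h_j)$.
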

\begin{proof}
It is clearly true when entering the first loop.

The equality for the third item is attained once
after the end of the first loop.

Inside the loop, there is no difficulty when
the reduction
is performed by one of the initial $g_i$'s.
One applies the fourth item to ensure
that no second $\LT_\r(u_{i,j} g_i)$ reaches $\LT_\r(f).$

When the divisor $g$ was added to $T$ at a previous iteration
of the algorithm, \textit{i.e.} $g=h_m$ for some $m<j,$
then the situation is the following.
Firstly, the preservation of the fourth item is clear.

Then, as $m<j,$ we get from the fourth item of the loop invariant that 
$\LT_\r(h_j)<\LT\r(g)$ and also
$\LT_\r(c_vx^v g)=\LT_\r(h_j).$
It implies that
$\val_\r(c_vx^v)>0$.
Hence, as $\val_\r(\mu_j -1)>0,$
the same is true for $\mu_{j+1}:=\mu_j-c_vx^v \mu_m$ 
and the second item is preserved.

From $\mu_j f=h_j+\sum_i u_{i,j} g_i ,$
and $\mu_m f=h_m+\sum_i u_{i,m} g_i m,$
one gets 
$(\mu_j-c_vx^v \mu_m) f=(h_j-c_vx^v h_m)+\sum_i (u_{i,j}-c_vx^v u_{i,m}) g_i $
so
$\mu_{j+1} f=h_{j+1}+\sum_i u_{i,j+1} g_i +r_{j+1},$
and the first item is preserved.

As $\val_\r(c_v x^v) >0,$
then $\LT_\r(c_v x^v u_{i,m})<\LT_\r(u_{i,m}),$
so 
\begin{align*}
\LT_\r(u_{i,j}-c_v x^v u_{i,m}) &\leq \max (\LT_\r(u_{i,j}), \LT_\r(c_v x^v u_{i,m})), \\
 &\leq \max (\LT_\r(u_{i,j}), \LT_\r(u_{i,m})), 
\end{align*}
which is enough to obtain that the third item is preserved and concludes
the proof.
\end{proof}

\begin{cor}
  Algorithm \ref{algo: Mora's WNF} computes a weak normal form.
\end{cor}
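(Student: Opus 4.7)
The plan is to verify the three axioms of the weak normal form definition in turn, using Lemma \ref{lem:WNFalgo_terminates} to guarantee termination and Lemma \ref{lem:correctness} to produce the required Gröbner representation.

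Axiom (1) is immediate: if $f=0$, then the initialization sets $h := f = 0$, the \texttt{while} loop condition $h \neq 0$ fails at once, and the algorithm returns $0$.

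Axiom (2) follows directly from the exit condition of the \texttt{while} loop. If the algorithm returns $h \neq 0$, then necessarily $T_h = \emptyset$, meaning no $\LT_\r(g)$ with $g \in T$ divides $\LT_\r(h)$. Since $\{g_1,\ldots,g_s\} \subseteq T$ throughout the execution, in particular none of the $\LT_\r(g_i)$ divides $\LT_\r(h)$, so $\LT_\r(h)$ does not lie in the ideal spanned by the leading terms of the $g_i$'s.

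For axiom (3) I would invoke Algorithm \ref{algo: Mora's WNF with cofactors}, which performs exactly the same operations on $h$ and $T$ as Algorithm \ref{algo: Mora's WNF} while additionally maintaining the cofactors $\mu_j$ and $u_{i,j}$. Since the control flow is identical, both algorithms return the same polynomial $h$. By Lemma \ref{lem:correctness} applied at the final index $j$, one obtains an identity $\mu f = h + \sum_i u_i g_i$ with $\val_\r(\mu - 1) > 0$ and $\LT_\r(u_i g_i) \leq \LT_\r(f)$, with equality attained at most once. The condition $\val_\r(\mu - 1) > 0$ implies that $\mu$ is invertible in $\KX[\r]$ (as noted just after Algorithm \ref{algo: Mora's WNF}), and setting $u := \mu$, $u_{g_i} := u_i$ yields exactly the strong Gröbner representation required by axiom (3).

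The main obstacle is not technical but organisational: the original algorithm tracks only $h$, so axiom (3) is not visibly controlled in Algorithm \ref{algo: Mora's WNF} itself. One must explicitly observe that the cofactor-tracking version in Algorithm \ref{algo: Mora's WNF with cofactors} executes the same loop body (the extra lines only update $\mu_j$ and $u_{i,j}$ deterministically from quantities already computed), so the loop invariant of Lemma \ref{lem:correctness} transports back to give the desired decomposition at the termination point guaranteed by Lemma \ref{lem:WNFalgo_terminates}.
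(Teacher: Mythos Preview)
Your proposal is correct and follows essentially the same approach as the paper: verify the three axioms in turn, using the loop exit condition for axiom (2) and the loop invariant of Lemma~\ref{lem:correctness} (via the cofactor-tracking Algorithm~\ref{algo: Mora's WNF with cofactors}) for axiom (3). Your write-up is simply more explicit than the paper's, spelling out why Algorithms~\ref{algo: Mora's WNF} and~\ref{algo: Mora's WNF with cofactors} have identical control flow and invoking Lemma~\ref{lem:WNFalgo_terminates} for termination, whereas the paper just cites Lemma~\ref{lem:correctness} directly.
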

\begin{proof}
  We verify the three items of the definition of weak normal forms.
  If $f = 0$, the algorithm immediately returns $0$.

  Assume that $\WNF(f,G) \neq 0$.
  This implies that after the last loop of the algorithm, $T_{h} = \emptyset$, and since $\LT_\r(T)$ contains the leading terms of $G$, $\LT_\r(\WNF(f,G))$ is not divisible by any of the $\LT_\r(G)$.

  Finally, the third item follows from the third item of Lemma~\ref{lem:correctness}.
\end{proof}

\begin{cor}
If $G$, a finite set of polynomials, is a GB of $I_\r \subset \KX[\r]$,
then for any polynomial $f \in I_\r$,
$\WNF(f,G)=0$. \label{cor:WNF_by_a_GB}
\end{cor}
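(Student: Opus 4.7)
The plan is short because the previous corollary has already done the heavy lifting: it establishes that Algorithm~\ref{algo: Mora's WNF} produces a genuine weak normal form in the sense of the definition in Section~\ref{sec:definition}. So I only need to combine the defining properties of a WNF with the defining property of a Gröbner basis.

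Concretely, I would argue by contradiction. Suppose $f \in I_\r$ is a polynomial with $h := \WNF(f,G) \neq 0$. By the third item of the definition of a weak normal form, there exists $u \in K[\X]$ invertible in $\KX[\r]$ (produced as the $\mu$ of Algorithm~\ref{algo: Mora's WNF with cofactors}) and polynomials $u_g$ such that
\[
uf - h = \sum_{g \in G} u_g g.
\]
Since $f \in I_\r$, the element $uf$ lies in $I_\r$, and since each $g \in G \subseteq I_\r$, the sum on the right is in $I_\r$. Hence $h = uf - \sum_g u_g g \in I_\r$.

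Now I would invoke the Gröbner basis hypothesis: because $G$ is a GB of $I_\r$ and $h$ is a nonzero element of $I_\r$, there must exist some $g \in G$ with $\LT_\r(g) \mid \LT_\r(h)$. But the second item of the WNF definition forces exactly the opposite: $\LT_\r(h)$ is not in the ideal generated by $\LT_\r(G)$. This contradiction shows that $h$ must be $0$.

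No step here looks delicate; the only thing to double-check is that the $u$ supplied by Algorithm~\ref{algo: Mora's WNF with cofactors} really is invertible in $\KX[\r]$, but this is exactly the content of the remark that $\val_\r(\mu-1) > 0$ implies invertibility in $\KX[\r]$, already recorded after Algorithm~\ref{algo: Mora's WNF}. So the argument reduces to plugging the previous corollary into the GB definition, with no real obstacle.
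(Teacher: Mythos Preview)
Your proof is correct and follows essentially the same route as the paper: both derive $h \in I_\r$ from the relation $\mu f - h = \sum u_i g_i$ and then use the Gröbner basis property to contradict the non-reducibility of $\LT_\r(h)$. The only cosmetic difference is that the paper phrases this directly in terms of the loop invariant and the while-condition $T_h \neq \emptyset$, whereas you invoke the abstract WNF properties established in the preceding corollary; the substance is identical.
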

\begin{proof}
If $G$ is a GB of $I_\r$, then when dividing $f \in I_\r$,
on Line 3, $T_h$ is never empty.
Indeed, from the first item of the loop invariant,
$h_j=\mu_j f-\sum_i u_{i,j} g_i$ means that $h_j \in I_\r.$
Consequently, the algorithm can only terminate
if $h$ reaches $0.$
\end{proof}

\section{Buchberger's algorithm with WNF}
\label{sec:buchberger}
\subsection{Description of the algorithm}
\label{sec:descr-algor}

We prove Buchberger's criterion following
the lines of \S 3.2 of \cite{CVV}.
We rely on a small variation of the
technical Lemma 3.6 of \cite{CVV},
which is a generalization of~\cite[Sec.~2.10, Prop.~5]{Bu65}:

\begin{lem}
\label{lem:cancel_and_Spol}
Let $h_1,\dots,h_m \in \KX[\r] $ and $t_1,\dots,t_m \in \TX[\r]$.
We assume that the $\LT_\r(t_i h_i)$'s all have the same image in
$\TTX[\r]$ and that
$\LT_\r(\sum_{i=1}^m t_i h_i) < \LT_\r(t_1 h_1)$. Then 
\[\sum_{i=1}^m t_i h_i =
\sum_{i=1}^{m-1} t'_i {\cdot}\spoly(h_i,h_{i+1})
+ t'_m {\cdot}h_m  \]
for some $t'_1, \ldots, t'_m \in \TX[\r]$ such that
$\LT_\r(t'_i {\cdot}\spoly(h_i,h_{i+1}))< \LT_\r(t_1 h_1)$
for $i \in \{1, \ldots, m{-}1\}$
and $\LT_\r(t'_m {\cdot}h_m)< \LT_\r(t_1 h_1).$
\end{lem}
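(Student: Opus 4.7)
My plan is to mimic the classical Buchberger-style proof of this identity, via an Abel summation rearrangement, carefully adapted to the Gauss valuation $\val_\r$.

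First, I would \emph{normalize} the summands. Let $\gamma \in \NN^n$ denote the common leading monomial of the $\LT_\r(t_i h_i)$'s (well-defined by the first hypothesis), and write $\LT_\r(t_i h_i) = d_i \X^\gamma$ with $d_i \in K^\times$. Setting $x_i := d_i^{-1}\, t_i h_i$, each $x_i$ has leading term exactly $\X^\gamma$ with coefficient $1$. Note that $\X^\gamma$, being divisible by both $\LM_\r(h_i)$ and $\LM_\r(h_{i+1})$, is divisible by the lcm-monomial $\mu_i := \mathrm{lcm}(\LM_\r(h_i), \LM_\r(h_{i+1}))$ appearing in the denominators of $\spoly(h_i, h_{i+1})$.

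Next, I would apply \emph{Abel summation by parts}. With partial sums $e_j := \sum_{i=1}^j d_i$, a direct telescoping computation gives
\begin{equation*}
  \sum_{i=1}^{m} t_i h_i \;=\; \sum_{i=1}^{m} d_i x_i \;=\; \sum_{i=1}^{m-1} e_i\,(x_i - x_{i+1}) \;+\; e_m x_m.
\end{equation*}
The third step is to \emph{recognize S-polynomials}: each difference $x_i - x_{i+1}$ has its two copies of $\X^\gamma$ canceling, and unravelling the definition of $\spoly$ one verifies that
\begin{equation*}
  x_i - x_{i+1} \;=\; s_i \cdot \spoly(h_i, h_{i+1}),
\end{equation*}
where $s_i \in \TX[\r]$ is the unique term with monomial $\X^\gamma/\mu_i$ (its coefficient being determined by $d_i, d_{i+1}$ and the leading coefficients of $h_i, h_{i+1}$). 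In particular $\LT_\r(s_i \cdot \spoly(h_i, h_{i+1})) <_\r \X^\gamma$.

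Finally, setting $t'_i := e_i s_i$ for $i \leq m{-}1$ and $t'_m := e_m d_m^{-1} t_m$ yields the claimed decomposition. The leading-term inequalities come from the hypothesis: reading off the $\X^\gamma$-coefficient of $\sum_i t_i h_i$ forces $\val(e_m) > \val(d_1)$, which gives $\LT_\r(t'_m h_m) <_\r \LT_\r(t_1 h_1)$; and up to reindexing so that $\LT_\r(t_1 h_1)$ dominates the $\LT_\r(t_i h_i)$'s, the intermediate partial sums satisfy $\val(e_i) \geq \val(d_1)$, so that, combined with $\LT_\r(x_i - x_{i+1}) <_\r \X^\gamma$, we obtain $\LT_\r(t'_i \cdot \spoly(h_i, h_{i+1})) <_\r \LT_\r(t_1 h_1)$ for $i \leq m - 1$.

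\textbf{Main obstacle.} The delicate part is exactly the valuation bookkeeping above. In the classical polynomial-over-a-field setting, Abel summation goes through easily because all nonzero scalars are equivalent and only monomials matter. Here, the mixed order $\leq_\r$ (Gauss valuation first, monomial $\leq_m$ second) forces one to simultaneously control the valuations of the partial sums $e_i$ and the monomials emerging from the $s_i \cdot \spoly(h_i, h_{i+1})$, channelling the single hypothesis $\LT_\r(\sum t_i h_i) <_\r \LT_\r(t_1 h_1)$ into uniform bounds on all the $e_i$'s.
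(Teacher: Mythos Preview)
Your approach is essentially identical to the paper's: normalize via $p_i = d_i^{-1} t_i h_i$, apply the Abel telescoping identity $\sum_i d_i p_i = \sum_{i<m}(d_1+\cdots+d_i)(p_i-p_{i+1}) + (\sum_i d_i)\,p_m$, and recognize each $p_i - p_{i+1}$ as a term times $\spoly(h_i,h_{i+1})$. The only difference is in the bookkeeping you flag as the ``main obstacle'': the paper simply records at the outset that all the $d_i$ share a common valuation $\mu$, so the ultrametric inequality yields $\val(e_i)\geq\mu$ for every partial sum directly---no reindexing is needed (and reindexing would in fact change which consecutive S-polynomials appear in the conclusion).
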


\begin{proof}
By assumption, there exist $\alpha \in \NN^n$ and 
$d_1, \ldots, d_m \in K^\times$ such that $\LT_\r(t_i h_i) = 
d_i \X^{\alpha}$ for all~$i \in \{1,\ldots,m\}$. Moreover all the 
$d_i$'s have the same valuation, say $\mu$.
Then $\val(\sum_i d_i) > \mu$.
We define $p_i=\frac{t_i h_i}{d_i}$, so that $t_i h_i = d_i p_i$.
Then
\begin{align*}
\sum_i t_i h_i 
&= d_1 (p_1-p_2)+ (d_1 + d_2) (p_2-p_3)+ \cdots + \\[-2mm]
&\phantom{={}} (d_1+\dots+ d_{m-1})(p_{m-1}-p_m)+ (d_1+\dots+ d_m) p_m.
\end{align*}
Observing that $p_i - p_{i+1} = t'_i 
\cdot \spoly(h_i, h_{i+1})$ with $t'_i$ a term, we get the first $t_i'$'s
and $t_m'=\frac{\sum_i d_i}{d_m}t_m.$

Then, since $\val(\sum_i d_i) > \mu,$ clearly $\LT_\r(t'_m {\cdot}h_m)< \LT_\r(t_1 h_1).$
Moreover, for any $i \in \{1, \ldots, m{-}1\}$, 
due to the cancellation in $\spoly(h_i,h_{i+1}),$
$\LT_\r (p_{i}-p_{i+1}) < \LT_\r (p_{i})=\frac{\LT_\r (t_{i}h_i)}{d_i}.$
Considering valuations,
$\sum_{k\leq i} d_k \cdot \frac{\LT (t_{i}h_i)}{d_i} \leq \LT_\r (t_{i}h_i).$
Consequently, $\LT_\r(t'_i {\cdot}\spoly(h_i,h_{i+1}))< \LT_\r(t_i h_i),$
which concludes the proof.
\end{proof}

\begin{prop}[Buchberger's criterion]
$G$ is a GB of $I_\r$ if and only if $G$ generates $I_\r$ and $\WNF(\spoly(g_i,g_j),G)=0$ for all pairs $g_i,g_j \in G$.
\end{prop}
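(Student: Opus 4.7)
The forward direction is immediate from Corollary~\ref{cor:WNF_by_a_GB}: each $\spoly(g_i,g_j)$ lies in $I_\r$, so its weak normal form modulo the Gröbner basis $G$ must vanish.

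For the reverse direction, I fix $f \in I_\r$ and show that $\LT_\r(g_i) \mid \LT_\r(f)$ for some $i$. Since $G$ generates $I_\r$, one expands $f$ as a convergent sum $f = \sum_j t_j g_{\sigma(j)}$ with $t_j \in \TX[\r]$. By convergence, only finitely many summands have Gauss valuation below any given threshold, so $T := \max_j \LT_\r(t_j g_{\sigma(j)})$ (with respect to $<_\r$) is attained, and $T \geq \LT_\r(f)$. I would choose a representation minimizing $T$; existence of a minimum follows as in \S3.2 of \cite{CVV} from the Noetherianity of the leading-term ideal (Proposition~2.8 of \cite{CVV}). If the max is attained at a unique index $j_0$, then no leading-order cancellation occurs, so $T = \LT_\r(f)$ and $\LT_\r(g_{\sigma(j_0)}) \mid \LT_\r(f)$.

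Otherwise, let $j_1,\ldots,j_m$ ($m \geq 2$) be the indices achieving the maximum. Their partial sum has leading term strictly below $T$, and Lemma~\ref{lem:cancel_and_Spol} rewrites it as $\sum_{k<m} t'_k\,\spoly(g_{\sigma(j_k)},g_{\sigma(j_{k+1})}) + t'_m g_{\sigma(j_m)}$ with every summand of leading term $< T$. By the S-polynomial hypothesis and the third axiom of weak normal forms, each S-polynomial admits an identity $\mu_k \cdot \spoly(g_{\sigma(j_k)},g_{\sigma(j_{k+1})}) = \sum_l v_{k,l}\, g_l$ with $\val_\r(\mu_k-1)>0$ and $\LT_\r(v_{k,l}\, g_l) \leq \LT_\r(\spoly(g_{\sigma(j_k)},g_{\sigma(j_{k+1})})) < T$. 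Since $\LT_\r(\mu_k^{-1}) = 1$, multiplying through by $\mu_k^{-1}$ and reorganizing yields a representation of the S-polynomial by elements of $G$ whose leading-term bound is preserved. Substituting these back into the expression for $f$ gives a new term representation of $f$ with strictly smaller $T$, contradicting the minimality of $T$.

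The main technical delicacy is the existence of the minimum in the non-well-ordered setting of $\TX[\r]$, which is resolved by the constraint $T \geq \LT_\r(f)$ together with the Noetherianity argument of \cite{CVV}; controlling the invertible factors $\mu_k$ during substitution is routine, since $\LT_\r$ is multiplicative and $\LT_\r(\mu_k^{-1}) = 1$.
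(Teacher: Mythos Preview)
Your proof is correct and follows essentially the same route as the paper: the forward direction via Corollary~\ref{cor:WNF_by_a_GB}, and the reverse direction by choosing a representation of $f$ with minimal top term, applying Lemma~\ref{lem:cancel_and_Spol} to the cancelling summands, substituting the WNF decompositions of the S-polynomials (with the invertible $\mu_k$ absorbed since $\LT_\r(\mu_k^{-1})=1$), and contradicting minimality. The only cosmetic differences are that you work with term representations rather than series coefficients and handle the unique-maximum case directly; you are also slightly more explicit than the paper about why the minimal $T$ exists.
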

\begin{proof}
The $\Rightarrow$ part is direct thanks to Corollary \ref{cor:WNF_by_a_GB}.

Let us prove the $\Leftarrow $ part.
Let $f \in I$ be such that $\LT_\r (f) \notin \left\langle \LT_\r (G) \right\rangle.$
As $G$ generates $I$, $f$ can be written as $f = \sum_{i=1}^s h_i g_i$
for some Tate series $h_i$'s in $\KX[\r].$

Let $t = \max_i \LT_\r (h_i g_i).$
As $\LT_\r (f) \notin \left\langle \LT_\r (G) \right\rangle$, then
$\LT_\r (f) <t.$
Consequently, among the decompositions of $f$ using $G$, there is one such that
$t$ is minimal.

Let $J$ be the set of indices $i$ such that  $\LT_\r (h_i g_i)= c_i t$
for some $c_i \in  O_K^\times.$
Let $t_i = \LT_\r(h_i)$ for $i \in J.$
Let $h = \sum_{i \in J} t_i g_i.$ We have $\LT_\r(h) <t$
and $\card (J) \geq 2$
as a cancellation has to appear.
We apply Lemma \ref{lem:cancel_and_Spol}: there exist terms $t_{j,l}'$
and $t'$ and an index $j_0 \in J$ such that
$h = \sum_{j,l \in J} t_{j,l}' \spoly(g_j,g_l)+t' g_{j_0}$
and $\LT_\r (t' g_{j_0}) < t$
and $\LT_\r (t_{j,l}'\spoly(g_j,g_l)) < t.$
We can compute the WNF of the polynomial $\spoly(g_j,g_k)$ by $G$
and we get some invertible polynomial $u_{j,l}$ and polynomials $v_i^{(j,l)}$ such that: $ u_{j,l}\spoly(g_j,g_l) = \sum_{i=1}^s v_i^{(j,l)} g_i$ with
$\LT_\r (v_i^{(j,l)} g_i) \leq \LT_\r (\spoly(g_j,g_l)).$

Multiplying by $u_{j,l}^{-1}$
and summing those decompositions, we get that
$\sum_{j,l \in J} t_{j,l}' \spoly(g_j,g_l)= \sum_{i=1}^s w_i g_i$
with
$\LT_\r(w_i g_i) \leq \max_{j,l} \LT_\r (u_{j,l}^{-1} v_i^{(j,l)} g_i)=\max_{j,l} \LT_\r ( v_i^{(j,l)} g_i)$.
So $\LT_\r(w_i g_i)$ is less than or equal to $\max_{j,l} \LT_\r (t_{j,l}'\spoly(g_j,g_l))$ and strictly smaller than $t$.
Summing all summands we then obtain a new decomposition of $f$  contradicting the minimality of $t.$
\end{proof}

\begin{algorithm}[H]
	\caption{Buchberger's algorithm with Mora's WNF}
	\label{algo: Buchberger with WNF}
	\begin{algorithmic}[1]
		\REQUIRE $F:=(f_1,\dots,f_s)$ a list of polynomials in $K[\X].$
		\ENSURE $G$ a list of polynomials in $K[\X]$ which is a GB of $\left\langle F \right\rangle$

		\STATE	$G:=F$ ; 
		\STATE  $P:=\{ (f,g) \mid f,g \in G, \: f \neq g \}$ ;
		\WHILE{ $P \neq \emptyset$ }	
			\STATE choose and remove $(f,g)$ from $P$ ;
			\STATE $h := \WNF(\spoly(f,g), G)$;
			\IF{ $h \neq 0$}
				\STATE $P:= P \cup \{ (h,f) \mid f \in G \}$ ;
				\STATE $G:=G \cup \{h \}$ ;
			\ENDIF		
		\ENDWHILE
		\RETURN $G$ ;
	\end{algorithmic}
\end{algorithm}

\begin{prop}
Algorithm \ref{algo: Buchberger with WNF} terminates and is correct.
\end{prop}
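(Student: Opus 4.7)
The plan is to treat termination and correctness separately, both times leveraging the work already done. Correctness will follow almost immediately from Buchberger's criterion just proved, so the only real content is termination.

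For termination, I would first note that whenever Line~8 adjoins a new element $h$ to $G$, the defining property of the weak normal form gives that $\LT_\r(h)$ is not divisible by any $\LT_\r(g)$ with $g \in G$. Thus adjoining $h$ strictly enlarges the monomial ideal generated by $\LT_\r(G)$ in $\TTX[\r]$. By the Noetherian stability result \cite[Prop.~2.8]{CVV} (already invoked in the proof of Lemma~\ref{lem:WNFalgo_terminates}), any ascending chain of such ideals must stabilize, so only finitely many additions to $G$ can occur. Once $G$ is stable, no new pair is ever added to $P$, and each iteration of the while loop removes exactly one pair from $P$. Hence the loop terminates after finitely many steps.

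For correctness, I would first check the loop invariant $\langle G\rangle = \langle F\rangle$ in $\KX[\r]$. Initially $G=F$. When $h = \WNF(\spoly(f,g), G)$ is added, the output specification of Algorithm~\ref{algo: Mora's WNF} provides polynomials $\mu, u_1, \dots, u_{|G|}$ with $\mu\cdot\spoly(f,g) = \sum_i u_i g_i + h$ and $\val_\r(\mu-1) > 0$. As noted after Algorithm~\ref{algo: Mora's WNF}, such a $\mu$ is invertible in $\KX[\r]$, so $h = \mu^{-1}\bigl(\mu\cdot\spoly(f,g) - \sum_i u_i g_i\bigr) \in \langle G\rangle$, and adjoining $h$ preserves the ideal. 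At termination $P=\emptyset$, meaning every S-polynomial $\spoly(g_i,g_j)$ with $g_i, g_j \in G$ was processed and returned $0$ through $\WNF(\cdot, G)$. Buchberger's criterion then yields that $G$ is a Gröbner basis of $\langle G\rangle = \langle F\rangle$.

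The main obstacle is termination. Unlike the classical polynomial setting where Noetherianity of $K[\X]$ is immediate, here one must appeal to the Tate-specific ascending chain condition for monoid ideals in $\TTX[\r]$ from \cite[Prop.~2.8]{CVV}; the rest of the argument is then purely combinatorial bookkeeping on the pair set $P$.
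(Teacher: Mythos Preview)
Your proof is correct and follows exactly the paper's (very terse) approach: termination via the ascending chain condition on leading-term ideals from \cite[Prop.~2.8]{CVV}, and correctness via the Buchberger criterion just established. One small wording slip: not every processed pair literally ``returned $0$''---those that returned a nonzero $h$ had $h$ adjoined to $G$, but this still yields the strong Gr\"obner representation $\mu\,\spoly(f,g)=\sum u_i g_i + 1\cdot h$ over the final $G$, which is all the proof of Buchberger's criterion actually uses.
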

\begin{proof}
Correctness comes from Buchberger's criterion.
Termination is a consequence of Prop 2.8 of \cite{CVV}. 
\end{proof}

\begin{cor}
If $I$ is generated by polynomials, then
Algorithm \ref{algo: Buchberger with WNF}
provides a GB of $I_\r \subset  \KX[\r]$ made of polynomials of $I.$
\label{corr:finite_GB_made_of_pol}
\end{cor}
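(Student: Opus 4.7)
The plan is to assemble the corollary from the preceding proposition (termination and correctness of Algorithm~\ref{algo: Buchberger with WNF}) together with two easy observations: that $I$ admits a finite polynomial generating set, and that Algorithm~\ref{algo: Buchberger with WNF} operates entirely inside $K[\X]$.

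First, I would reduce to the case of a finite polynomial input. By hypothesis, $I$ is spanned in $\KX[\r]$ by some family of polynomials $(p_\lambda)_{\lambda \in \Lambda}$. Since $\KX[\r]$ is Noetherian---a consequence of Theorem~\ref{theo:GB}, which asserts the existence of a finite Gröbner basis for every ideal---$I$ is finitely generated, and any finite set of $\KX[\r]$-generators of $I$ can be rewritten using only finitely many of the $p_\lambda$'s. This yields a finite subset $F \subset K[\X]$ with $I = \langle F \rangle_{\KX[\r]}$, on which Algorithm~\ref{algo: Buchberger with WNF} can legitimately be run.

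Next, the preceding proposition gives termination and ensures that the output $G$ is a Gröbner basis of $\langle F \rangle_{\KX[\r]} = I$. It then remains only to check that $G$ consists of polynomials lying in $I$. Polynomiality is immediate by inspection of Algorithm~\ref{algo: Mora's WNF}: the working value $h$ is updated only through $h := \spoly(h,g)$ with polynomial operands, so $\WNF$ sends polynomial inputs to polynomial outputs; hence every element appended to $G$ in Algorithm~\ref{algo: Buchberger with WNF} is a polynomial. Membership in $I$ follows by induction on the order in which the elements are added: the WNF specification provides a relation $\mu\cdot\spoly(f,g) - \WNF(\spoly(f,g), G) \in \langle G \rangle_{K[\X]}$ with $\mu \in K[\X]$, and since $f,g \in G \subset I$ by the inductive hypothesis, the newly added element also lies in $I$.

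I do not expect a genuine obstacle here: the heavy lifting has already been done in proving termination and correctness of Algorithm~\ref{algo: Buchberger with WNF}. The only step requiring any care is the extraction of a finite polynomial generating set, where one must invoke Noetherianity of the Tate algebra $\KX[\r]$ rather than of the polynomial ring $K[\X]$.
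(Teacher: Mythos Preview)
Your proof is correct and follows essentially the same approach as the paper: invoke termination and correctness from the preceding proposition, then observe inductively that $\WNF$ and $\spoly$ keep everything inside $K[\X]\cap I$. The paper's own proof is a two-line version of your second and third paragraphs. Your first paragraph, reducing an arbitrary polynomial ideal to one with a finite polynomial generating set via Noetherianity of $\KX[\r]$, is an extra step the paper does not make explicit (the algorithm already takes a finite list of polynomials as input, so the paper tacitly identifies ``$I$ is generated by polynomials'' with ``a finite polynomial input is given''); it is correct and harmless, but not strictly needed in the paper's framing.
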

\begin{proof}
If $f \in I$ and $G \subset I$ are polynomials, then
$\WNF(f,G)$ is a polynomial of $I.$
As the $\spoly$ considered in Algo. \ref{algo: Buchberger with WNF}
are polynomials in $I$, we obtain the result.
\end{proof}

\subsection{Precision and effective computations}
\label{subsec:precision_and_Kex}
We may remark firstly that,
as we wrote all properties and proofs in terms
of $\LT$'s, the algorithms of $\S 3$ and $4$
are valid over $\KzX[\r].$
In particular, if we work with
$\r = (0,\dots,0),$ 
then no division in $K$ is involved:
as in \cite{CVV}, working at finite precision,
no loss of absolute precision can occur.

Secondly, if we work in $\Ke \subset K$, all computations take place in $\Ke$. Hence if $(f_1,\dots,f_s) \in \Ke [\X]$ and $\r \in \QQ^n$,
Algorithms \ref{algo: Mora's WNF} and \ref{algo: Buchberger with WNF} provide an algorithm working over $\Ke$ to compute a GB of $I_\r$ made of polynomials
in $\Ke [\X],$ without having to deal with any precision issue.

\subsection{Toy Implementation}

A toy implementation of the algorithms
of this Section is available
here: \url{https://gist.github.com/TristanVaccon}.
We present some timings
and features of the Algorithm
in Appendix \ref{sec:appendix_timings}
on page \pageref{sec:appendix_timings}.

\section{Mora's WNF and overconvergence}

We now consider the case of overconvergent series, and present a version of Mora's weak normal form algorithm for that case.

\subsection{\'Ecarts for overconvergence}

Let $f,g \in \KX[\s], \s \in \Q^n, \r \in \Q^n, \s \geq \r$.
We define écarts adapted to computation
over $\KX[\r]$ for series belonging also to
$\KX[\s].$

\begin{defn}
We define the $\s$-support of $f=\sum_{\alpha \in \N^n} c_\alpha \X^\alpha \in \KX[\s]$ as:
\[\Supp_\s(f) = \left\lbrace \alpha \textrm{ s.t. } \val_\s (c_\alpha \X^\alpha) = \val_\s (f) \right\rbrace.\]
Since $f \in \KX[\s],$
$\Supp_\s(f)$ is finite. 
Then, we define the $(\s,\r)$-degree of $f$ as:
\[\deg_{\s,\r}(f) = \max_{\alpha \in \Supp_\s(f)} (\s-\r)\cdot \alpha. \]
\end{defn}

\begin{defn}
We define:
\begin{align*}
\Ecart_{\s,\r,0}(f)&:=\val_\s(\LT_\r(f))-\val_\s(f), \\
\Ecart_{\s,\r,1}(f)&:=\deg_{\s,\r}(f)-\deg_{\s,\r}(\LT_\r(f))
\end{align*}
\end{defn}

\begin{lem}
For $f \in \KX[\s],$
$i \in \left\lbrace 0,1 \right\rbrace, 
\Ecart_{\s,\r,i}(f)\geq 0.$
\end{lem}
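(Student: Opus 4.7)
The plan is to treat the two values of $i$ separately, since they rely on different facts.

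For $i = 0$, the inequality is almost a tautology. By definition, $\val_\s(f)$ is the minimum of $\val_\s$ taken over the monomial terms of $f$, and $\LT_\r(f)$ is one of these terms. Hence $\val_\s(\LT_\r(f)) \geq \val_\s(f)$, which is exactly $\Ecart_{\s,\r,0}(f) \geq 0$.

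For $i = 1$, let $\gamma \in \NN^n$ denote the exponent vector of $\LM_\r(f)$. Since $\LT_\r(f)$ is a single term, $\Supp_\s(\LT_\r(f)) = \{\gamma\}$ and therefore $\deg_{\s,\r}(\LT_\r(f)) = (\s - \r) \cdot \gamma$; it then suffices to exhibit some $\beta \in \Supp_\s(f)$ with $(\s - \r) \cdot \beta \geq (\s - \r) \cdot \gamma$. I would in fact prove this for every such $\beta$, by combining two opposing inequalities. On the one hand, membership of $\beta$ in $\Supp_\s(f)$ yields $\val(c_\beta) - \s \cdot \beta \leq \val(c_\gamma) - \s \cdot \gamma$. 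On the other hand, the fact that $\LT_\r(f) = c_\gamma \X^\gamma$ is the $<_\r$-leading term of $f$ gives $\val(c_\gamma) - \r \cdot \gamma \leq \val(c_\beta) - \r \cdot \beta$. Subtracting these two inequalities makes the valuations cancel and produces $(\s - \r) \cdot (\beta - \gamma) \geq 0$, as required; taking the maximum over $\beta \in \Supp_\s(f)$ then yields $\deg_{\s,\r}(f) \geq (\s - \r) \cdot \gamma$.

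There is essentially no obstacle: the only non-trivial case is $i=1$, and once the two defining inequalities for $\gamma$ and for $\beta \in \Supp_\s(f)$ are written down, subtracting them does the job. I do not anticipate needing the hypothesis $\s \geq \r$ explicitly in the last step; it is however implicit in the assumption $f \in \KX[\s]$, which is what makes $\Supp_\s(f)$ finite and $\deg_{\s,\r}(f)$ well-defined in the first place.
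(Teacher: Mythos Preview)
Your proof is correct and follows essentially the same approach as the paper: both arguments reduce the $i=1$ case to combining the inequalities $\val_\s(c_\beta\X^\beta)\leq\val_\s(c_\gamma\X^\gamma)$ (from $\beta\in\Supp_\s(f)$) and $\val_\r(c_\gamma\X^\gamma)\leq\val_\r(c_\beta\X^\beta)$ (from $\gamma$ being the $\r$-leading exponent) to obtain $(\s-\r)\cdot(\beta-\gamma)\geq 0$. The only cosmetic slip is that one \emph{adds} these two same-direction inequalities rather than subtracts them; the resulting conclusion is unchanged.
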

\begin{proof}
For $\Ecart_{\s,\r,0}(f)$, it is a direct consequence
of the definition of $\val_\s.$

Now, let us take some $\alpha \in \Supp_\s(f)$
such that $(\s-\r)\cdot \alpha = \deg_{\s,\r}(f).$
Let $c_\alpha$ be the coefficient of $\X^\alpha$ in $f.$
Let $\LT_\r(f)=c_\beta \X^\beta.$
Then, by definition,
$\val_\s(c_\beta \X^\beta) \geq \val_\s(f)=\val_\s(c_\alpha \X^\alpha),$
and $\val_\r(c_\alpha \X^\alpha) \geq \val_\r(c_\beta \X^\beta).$
Thus,
\begin{align*}
\val (c_\beta)-\s \cdot \beta &\geq \val (c_\alpha)-\s \cdot \alpha, \\
\val (c_\alpha)-\r \cdot \alpha &\geq \val (c_\beta)-\r \cdot \beta, \text{ and then } \\
\s \cdot (\alpha-\beta) \geq \val(c_\alpha)&-\val(c_\beta) \geq \r \cdot(\alpha-\beta),
\end{align*}
which implies that $(\s-\r)\cdot (\alpha-\beta)\geq 0.$
Since $\deg_{\s,\r}(f)=(\s-\r)\cdot \alpha$
and $\deg_{\s,\r}(\LT_\r(f))=(\s-\r)\cdot \beta,$
we can conclude that $\Ecart_{\s,\r,1}(f) \geq 0.$
\end{proof}

\subsection{WNF algorithm for overconvergent series}

The algorithm is straightforward, using the adapted notions of écarts.

\begin{algorithm}[H]
	\caption{$\textsf{WNF}(f,g, s,r)$, Mora's overconvergent Weak Normal Form algorithm}
	\label{algo: Mora's WNF for overconvergence}
	\begin{algorithmic}[1]
		\REQUIRE $f,g_1,\dots,g_s \in \KX[\s], \s \in \Q^n, \r \in \Q^n, \s \geq \r$
		\ENSURE $h \in \KX[\s]$ such that for some $\mu,u_1,\dots,u_s \in \KX[\s],$  $\mu f=\sum u_i g_i +h,$ \\
					when $h \neq 0,$ $\LT_\r(h)$ is divisible by no $\LT_\r(g_i)$'s and $\mu$ is invertible in $\KX[\r]$.
		Moreover, $\LT_\r (u_i g_i) \leq \LT_\r(f).$	
						
		\STATE	$h:=f$ ; 
		\STATE  $T:=(g_1,\dots,g_s)$ ;
		\WHILE{ $h \neq 0$ and $T_h := \{g \in T, \LT_\r(g) \mid \LT_\r(h) \} \neq \emptyset$}	
		\STATE choose $g \in T_h$ minimizing first 
		$\Ecart_{\s,\r,0}(g)$, then $\Ecart_{\s,\r,1}(g)$ ;
		\IF{$\Ecart_{\s,\r,0}(g) >\Ecart_{\s,\r,0}(h),$
		or  $\Ecart_{\s,\r,1}(g)>\Ecart_{\s,\r,1}(h)$}
		\STATE $T:=T \cup \{h\}$;
		\ENDIF
		\STATE $h:=\spoly(h,g)$ ;
		\ENDWHILE
		\RETURN $h$ ;
	\end{algorithmic}
\end{algorithm}

\subsection{Correctness and convergence}

\begin{lem}
If $g \in T_{h_m}$ is such that:
\begin{itemize}
\item $\Ecart_{\s,\r,0}(g) \leq \Ecart_{\s,\r,0}(h_m)$,
\item $\Ecart_{\s,\r,1}(g) \leq \Ecart_{\s,\r,1}(h_m)$,
\end{itemize}
and if $t=\LT_\r(h_m)/\LT_\r(g)$
and $h_{m+1}=h_m-tg,$ then
\[\val_s(h_{m+1}) \geq \val_s(h_m). \]
In case of equality, then moreover,
\[ \deg_{\s,\r}(h_{m+1})\leq \deg_{\s,\r}(h_m).\]\label{lem:ecarts_vals_degre_s_r}
\end{lem}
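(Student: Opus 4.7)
The plan is to handle the two assertions separately, deducing both from manipulating the definitions of the two écarts together with the multiplicativity of $\val_\s$ and the behaviour of $\Supp_\s$ under multiplication by a term.

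First I would establish the valuation inequality. Setting $t = \LT_\r(h_m)/\LT_\r(g)$, which is a term, one has $\val_\s(tg) = \val_\s(t) + \val_\s(g) = \val_\s(\LT_\r(h_m)) - \val_\s(\LT_\r(g)) + \val_\s(g)$. Using the definition of $\Ecart_{\s,\r,0}$ this rewrites as $\val_\s(tg) = \val_\s(h_m) + \Ecart_{\s,\r,0}(h_m) - \Ecart_{\s,\r,0}(g)$, which is $\geq \val_\s(h_m)$ by the first hypothesis. Since $h_{m+1} = h_m - tg$, the ultrametric inequality yields $\val_\s(h_{m+1}) \geq \min(\val_\s(h_m), \val_\s(tg)) \geq \val_\s(h_m)$.

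Next, assume equality $\val_\s(h_{m+1}) = \val_\s(h_m)$, so that $\Supp_\s(h_{m+1})$ makes sense at this common valuation. Every monomial $\X^\alpha$ with $\alpha \in \Supp_\s(h_{m+1})$ appears either in $h_m$ at the $\s$-valuation $\val_\s(h_m)$, so $\alpha \in \Supp_\s(h_m)$ and $(\s-\r)\cdot\alpha \leq \deg_{\s,\r}(h_m)$, or in $tg$ at valuation $\val_\s(tg) = \val_\s(h_m)$, so $\alpha \in \Supp_\s(tg)$. In the latter case, writing $t = c\X^\beta$ with $\beta = \alpha' - \alpha''$ where $\LT_\r(h_m) = c'\X^{\alpha'}$ and $\LT_\r(g) = c''\X^{\alpha''}$, a direct check shows $\Supp_\s(tg) = \beta + \Supp_\s(g)$, so
\[
\deg_{\s,\r}(tg) = (\s-\r)\cdot\beta + \deg_{\s,\r}(g) = (\s-\r)\cdot\alpha' + \Ecart_{\s,\r,1}(g),
\]
after substituting $\deg_{\s,\r}(g) = (\s-\r)\cdot\alpha'' + \Ecart_{\s,\r,1}(g)$. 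But $\deg_{\s,\r}(h_m) = (\s-\r)\cdot\alpha' + \Ecart_{\s,\r,1}(h_m)$, so by the second hypothesis $\deg_{\s,\r}(tg) \leq \deg_{\s,\r}(h_m)$, and therefore $(\s-\r)\cdot\alpha \leq \deg_{\s,\r}(h_m)$ in this case as well. Taking the maximum over $\alpha \in \Supp_\s(h_{m+1})$ gives $\deg_{\s,\r}(h_{m+1}) \leq \deg_{\s,\r}(h_m)$.

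The only mildly delicate point is keeping straight that $\LT_\r(h_m)$ and $\LT_\r(g)$ are selected via the $\r$-order and need not themselves lie in the $\s$-supports, but this never interferes: the identities $\deg_{\s,\r}(\LT_\r(f)) = (\s-\r)\cdot\alpha'$ and $\val_\s(\LT_\r(f)) \geq \val_\s(f)$ follow directly from the definitions, and the computation of $\Supp_\s(tg)$ only uses multiplicativity of $\val_\s$ on a term times a polynomial. No serious obstacle is anticipated; the argument is essentially a bookkeeping exercise once the two écarts are expanded.
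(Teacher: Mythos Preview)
Your proof is correct and follows essentially the same route as the paper: both parts hinge on the invariance of $\Ecart_{\s,\r,0}$ and $\Ecart_{\s,\r,1}$ under multiplication by a term (which you verify by direct computation rather than citing it), combined with $\LT_\r(tg)=\LT_\r(h_m)$ and the ultrametric inequality. The paper's version is slightly more compressed, simply asserting $\deg_{\s,\r}(h_{m+1}) \leq \max\big(\deg_{\s,\r}(h_m),\deg_{\s,\r}(tg)\big)$ where you spell out the term-by-term analysis of $\Supp_\s(h_{m+1})$, but the substance is identical.
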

\begin{proof}
Since $\Ecart_{\s,\r,0}(g) \leq \Ecart_{\s,\r,0}(h_m)$
and $\Ecart_{\s,\r,0}(g)=\Ecart_{\s,\r,0}(tg)$,
then $\val_\s(\LT_\r(tg))-\val_\s(tg) \leq \val_\s(\LT_\r(h_m))-\val_\s(h_m)$.
Moreover, $\LT_\r(tg)=\LT_\r(h_m)$, so
$\val_\s(tg) \geq \val_\s(h_m).$
By the ultrametric inequality,
we then obtain that $\val_\s(h_{m+1}) \geq \val_\s(h_m).$

Now, if $\val_\s(h_{m+1}) \geq \val_\s(h_m),$
we prove that $\deg_{\s,\r}(h_{m+1})\leq \deg_{\s,\r}(h_m).$
Since $\Ecart_{\s,\r,1}(g)=\Ecart_{\s,\r,1}(tg)$,
then the second hypothesis means that
$\deg_{\s,\r}(tg)-\deg_{\s,\r}(\LT_\r(tg)) \leq \deg_{\s,\r}(h_m)-\deg_{\s,\r}(\LT_\r(h_m)).$
From the equality $\LT_\r(tg)=\LT_\r(h_m)$, it follows that
$\deg_{\s,\r}(tg) \leq \deg_{\s,\r}(h_m)$.
As $h_{m+1}=h_m-tg,$ then
$\deg_{\s,\r}(h_{m+1}) \leq \max \big( \deg_{\s,\r}(h_m), \deg_{\s,\r}(tg) \big),$ and we can conclude. 
\end{proof}

\begin{prop}
If $\r < \s$ then
either Algorithm \ref{algo: Mora's WNF for overconvergence}
terminates in a finite number of steps,
or both $\LT_\r(h_m)$ and $\LT_\s(h_m)$ converge
to $0.$ \label{prop:terminaison_convergence_WNF_surconvergent}
\end{prop}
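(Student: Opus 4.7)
The plan parallels the termination argument of Lemma~\ref{lem:WNFalgo_terminates}, but drives it to a convergence statement in the non-terminating case. Assume the algorithm produces an infinite sequence $(h_m)_{m\geq 0}$; I will prove that $\LT_\r(h_m)\to 0$ and $\LT_\s(h_m)\to 0$.

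I first set up a Noetherian stabilization analogous to the one of Lemma~\ref{lem:WNFalgo_terminates}. Since $\val$ is $\Z$-valued and $\r,\s\in\QQ^n$, both écarts take values in a common scaled lattice $\tfrac{1}{D}\Z_{\geq 0}$ for some $D\geq 1$. Adjoining two formal variables $t_0,t_1$, I define
\[
\LTE(f) := \LT_\r(f)\cdot t_0^{D\,\Ecart_{\s,\r,0}(f)}\cdot t_1^{D\,\Ecart_{\s,\r,1}(f)}
\]
in $K[\X,t_0,t_1]$ and apply Prop.~2.8 of~\cite{CVV} to the monomial ideal spanned by $\LTE(T^{(m)})$. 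This ideal stabilizes past some step $N$. For $m\geq N$, if the selected reducer $g$ does not trigger Line~5, then both its écarts are $\leq$ those of $h_m$, and Lemma~\ref{lem:ecarts_vals_degre_s_r} applies directly. Otherwise $h_m$ is added to $T$, which forces $\LTE(h_m)\in\langle\LTE(T^{(N)})\rangle$, hence some $g'\in T^{(N)}\cap T_{h_m}$ has both écarts bounded by those of $h_m$. The lex-minimality of $g$ together with the triggering of Line~5 then forces $\Ecart_{\s,\r,0}(g)<\Ecart_{\s,\r,0}(h_m)$ strictly; repeating the first paragraph of the proof of Lemma~\ref{lem:ecarts_vals_degre_s_r} with a strict inequality gives $\val_\s(tg)>\val_\s(h_m)$, so $\val_\s(h_{m+1})=\val_\s(h_m)$, $\Supp_\s(h_{m+1})=\Supp_\s(h_m)$, and $\deg_{\s,\r}(h_{m+1})=\deg_{\s,\r}(h_m)$. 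In all cases, for $m\geq N$, $\val_\s(h_m)$ is non-decreasing and, when constant, $\deg_{\s,\r}(h_m)$ is non-increasing.

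I then split on the behaviour of $\val_\s(h_m)$. If $\val_\s(h_m)\to+\infty$, then $h_m\to 0$ in $\KX[\s]$; and since $\s\geq\r$ component-wise yields $\val_\r(c\X^\alpha)=\val_\s(c\X^\alpha)+(\s-\r)\cdot\alpha\geq \val_\s(c\X^\alpha)$ for all $\alpha\in\NN^n$, also $h_m\to 0$ in $\KX[\r]$; both leading terms therefore converge to $0$. Otherwise $\val_\s(h_m)$ eventually stabilizes at some $v_0$, and then $\deg_{\s,\r}(h_m)$ stabilizes at some $d_0$. Every $\alpha\in\Supp_\s(h_m)$ satisfies $(\s-\r)\cdot\alpha\leq d_0$, so the associated term has $\r$-valuation $v_0+(\s-\r)\cdot\alpha\leq v_0+d_0$; hence $\val_\r(\LT_\r(h_m))\leq v_0+d_0$. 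Because $\LT_\r(h_{m+1})<_\r\LT_\r(h_m)$ makes $\val_\r(\LT_\r(h_m))$ non-decreasing and now bounded, it stabilizes at some $w_\infty\leq v_0+d_0$, past which $\LM_\r(h_m)=\X^{\beta_m}$ is strictly decreasing for $<_m$. Writing $\val_\s(c_{\beta_m}\X^{\beta_m})\geq v_0$ rewrites as $(\s-\r)\cdot\beta_m\leq w_\infty-v_0\leq d_0$, and strict positivity of $\s-\r$ (from the hypothesis $\r<\s$) confines the $\beta_m$ to a finite subset of $\NN^n$, contradicting the infinite strict descent for the well-order $<_m$. Hence this second case cannot occur.

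The main obstacle is Step~1: adapting the Noetherianity argument of~\cite{CVV} to rational-valued écarts (handled by the common denominator $D$, which exists because $\val$ is discrete and $\r,\s\in\QQ^n$), and, whenever Line~5 triggers beyond $N$, verifying that the lex-minimal $g$ still yields the conclusion of Lemma~\ref{lem:ecarts_vals_degre_s_r} via the strict inequality on $\Ecart_{\s,\r,0}$. The rest is a short compactness argument on $(\s-\r)\cdot\beta$, where the strict componentwise inequality $\r<\s$ is crucial.
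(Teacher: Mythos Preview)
Your argument is correct and follows the same strategy as the paper: Noetherian stabilization of the extended leading-term ideal, then monotonicity of $\val_\s$ and $\deg_{\s,\r}$ via Lemma~\ref{lem:ecarts_vals_degre_s_r}, then a boundedness contradiction when $\val_\s$ stabilizes. Your handling of the ``added to $T$'' case past stabilization is in fact slightly more careful than the paper's (you correctly force a strict inequality on $\Ecart_{\s,\r,0}$ and deduce the monotonicity directly, whereas the paper asserts $\Ecart_{\s,\r,1}(g)\leq\Ecart_{\s,\r,1}(h_m)$ for the selected $g$ a bit hastily), and your final well-order contradiction differs only cosmetically from the paper's use of $\val_\r(h_m)\to+\infty$; note also that your appeal to the finiteness of $\{\beta:(\s-\r)\cdot\beta\leq d_0\}$ is redundant, since an infinite strictly $<_m$-decreasing sequence of monomials is already impossible.
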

\begin{proof}
Let us assume that
Algorithm \ref{algo: Mora's WNF for overconvergence}
does not terminate for
some inputs $\s > \r$ 
and 
$f,g_1,\dots,g_s \in \KX[\s].$

As we do eliminate
successively the $\LT_\r(h_m)$'s,
then by design, $\LT_\r(h_m)$ converges to zero.
 
Let $d_1,d_2 \in \N$
be such that for
any $f \in \KX[\s],$
$d_1 \val_\s(f) \in \Z$
and $d_2 \deg_{\s,\r}(f) \in \Z.$

Let us define the
extended leading term
of $h \in \KX[\s]$
as: $\LTE(h):=U^{d_1 \Ecart_{\s,\r,0}(h)}V^{d_2 \Ecart_{\s,\r,1}(h)}\LT(h) \in K[\X,U,V].$

Then, there is some
$N_1 \in \N$ such that 
for $m \geq N_1,$
the monomial ideal of 
$K[\X,U,V]$ generated by
the $\LTE$'s of the series
in $T$ is constant
(thanks to Prop. 2.8 of \cite{CVV}).
Thus for $m \geq N_1,$
if $h_m$
is not added to $T$
at the end of the $\textbf{while}$ loop,
then there is some $g \in T_{h_m}$
such that $\Ecart_{\s,\r,0}(g)\leq \Ecart_{\s,\r,0}(h_m).$
If it is added, then
by definition of $N_1$,
it means that there is
some $g \in T$
such that $\LTE(g)$
divides $\LTE(h_m)$,
and this implies
that $\LT(g) \mid \LT(h_m)$
and $\Ecart_{\s,\r,0}(g) \leq \Ecart_{\s,\r,0}(h_m).$

So in both cases, 
$\Ecart_{\s,\r,0}(g) \leq \Ecart_{\s,\r,0}(h_m).$

Then, 
if $h_m$ is not added to
$T$, it means that
the minimal $g$
satisfies
$\Ecart_{\s,\r,1}(g) \leq \Ecart_{\s,\r,1}(h_m).$
If it is added to $T$,
then again, 
by definition of $N_1$,
it means that there is
some $g \in T$
such that $\LTE(g)$
divides $\LTE(h_m)$,
and this implies
that $\LT(g) \mid \LT(h_m)$
and $\Ecart_{\s,\r,0}(g) \leq \Ecart_{\s,\r,0}(h_m)$
and $\Ecart_{\s,\r,1}(g) \leq \Ecart_{\s,\r,1}(h_m).$
So in both cases,
the minimal $g$
for the reduction
satisfies that
$\Ecart_{\s,\r,1}(g) \leq \Ecart_{\s,\r,1}(h_m).$

We can then apply Lemma
\ref{lem:ecarts_vals_degre_s_r}:
for any $m \geq N_1$,
$\val_\s (h_{m+1}) \geq \val_\s (h_m)$
and in case of equality,
$\deg_{\s,\r}(h_{m+1}) \leq \deg_{\s,\r}(h_m).$

Consequently, $\val_\s(h_m)$
is a non-decreasing sequence
in $\frac{1}{d_1}\Z.$
Hence, either it goes
to $+\infty$,
or there is some $N_2 \geq N_1$ such that $\val_\s(h_m)$
is constant for $m\geq N_2.$

Let us assume that 
we are in this second case.
Then $\deg_{\s,\r}(h_m)$
is non-increasing (for $m\geq N_2$) and thus, upper-bounded.
Let $m \geq N_2$ 
and $t=c_\alpha \X^\alpha$
a term of $h_m$ in $\Supp_\s(h_m).$

Then $\val_\s(h_m)=\val_\s(t)$ and
\begin{align*}
 \val_\r (h_m) &\leq \val_\r (t) \leq \val_\s(t)+(\s-\r)\cdot \alpha \\
 &\leq \val_\s(h_m)+\deg_{\s,\r}(h_m).
\end{align*}
Both $\val_\s(h_m)$
and $\deg_{\s,\r}(h_m)$
are upper-bounded, while
$\val_\r (h_m) \rightarrow+\infty.$
This is a contradiction.

Consequently,
$\val_\s (h_m) \rightarrow + \infty,$
which concludes the proof.
\end{proof}

\begin{prop}
Algorithm \ref{algo: Mora's WNF for overconvergence} is
correct and
\textit{mutatis mutandis}, 
computes a weak normal form.
\end{prop}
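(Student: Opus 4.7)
The plan is to mirror the strategy used for Algorithm~\ref{algo: Mora's WNF}: augment Algorithm~\ref{algo: Mora's WNF for overconvergence} with explicit cofactor tracking (as in Algorithm~\ref{algo: Mora's WNF with cofactors}), establish a loop invariant modeled on Lemma~\ref{lem:correctness}, and then combine it with the convergence analysis of Proposition~\ref{prop:terminaison_convergence_WNF_surconvergent} to conclude. Concretely, I would introduce sequences $\mu_j, u_{i,j} \in \KX[\s]$ updated exactly as in Algorithm~\ref{algo: Mora's WNF with cofactors}, and prove by induction on $j$ that (i)~$\mu_j f = h_j + \sum_i u_{i,j} g_i$; (ii)~$\val_\s(\mu_j - 1) > 0$; (iii)~$\LT_\r(u_{i,j} g_i) \leq \LT_\r(f)$ with equality attained at most once; (iv)~$\LT_\r(h_{j+1}) < \LT_\r(h_j)$.

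The preservation of (i), (iii), (iv) follows essentially verbatim from the proof of Lemma~\ref{lem:correctness}, the only substantive change being that~(ii) is stated with $\val_\s$ rather than $\val_\r$. This strengthening is available here because when the divisor $g = h_m$ was added to $T$ at a previous iteration, the selection rule combined with Lemma~\ref{lem:ecarts_vals_degre_s_r} forces $\val_\s(c_v \X^v) > 0$, so the update $\mu_{j+1} = \mu_j - c_v \X^v \mu_m$ preserves $\val_\s(\mu_{j+1} - 1) > 0$. Since $\s \geq \r$ componentwise and the monomials involved have non-negative exponents, $\val_\r \geq \val_\s$ on $\KX[\s]$, so (ii) yields $\val_\r(\mu_j - 1) > 0$ and hence invertibility of $\mu_j$ in $\KX[\r]$ as required by the specification.

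Next I would split on the dichotomy of Proposition~\ref{prop:terminaison_convergence_WNF_surconvergent}. If the algorithm terminates in finitely many steps, the three defining conditions of a weak normal form are read off directly from the invariants, exactly as in the corollary of Lemma~\ref{lem:correctness}. Otherwise, Proposition~\ref{prop:terminaison_convergence_WNF_surconvergent} gives $\val_\s(h_m) \to +\infty$, so $h_m \to 0$ in the $\s$-topology. Using that each update term $c_v \X^v$ satisfies $\val_\s(c_v \X^v) = \val_\s(\LT_\r(h_m)) - \val_\s(\LT_\r(g))$ and that the écarts stabilize past some rank (so $\val_\s(\LT_\r(h_m)) - \val_\s(h_m)$ remains bounded), the $\s$-valuations of the successive increments tend to $+\infty$. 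The sequences $(\mu_j)$ and $(u_{i,j})$ are therefore $\s$-Cauchy and converge in $\KX[\s]$. Passing to the limit in~(i) yields $\mu_\infty f = \sum_i u_{i,\infty} g_i$ with $\mu_\infty$ invertible in $\KX[\r]$, which is the degenerate weak normal form $h_\infty = 0$.

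The main obstacle is precisely this last step: upgrading convergence of the cofactors from the $\r$-topology to the $\s$-topology. In the polynomial setting the two coincide, but for overconvergent series both topologies genuinely matter; without control on $\val_\s$ one would only land in $\KX[\r]$, losing overconvergence of the output. The role of the second écart $\Ecart_{\s,\r,1}$ and the fine bookkeeping in Proposition~\ref{prop:terminaison_convergence_WNF_surconvergent} is precisely to guarantee that $\val_\s(c_v \X^v) \to +\infty$ along the iterations, which is what allows the limit cofactors to live in $\KX[\s]$.
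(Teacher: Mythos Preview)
Your overall strategy matches the paper's: carry over the loop invariant of Lemma~\ref{lem:correctness}, split on finite versus infinite termination via Proposition~\ref{prop:terminaison_convergence_WNF_surconvergent}, and in the infinite case show that $\val_\s(c_\nu \X^\nu) \to +\infty$ (using that $T$ eventually stabilizes) so that the cofactors converge in $\KX[\s]$. Your identification of the main obstacle, namely upgrading convergence of the cofactors from the $\r$-topology to the $\s$-topology, is exactly what the paper isolates.

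There is one overreach, however. Your strengthened invariant (ii), namely $\val_\s(\mu_j - 1) > 0$, is not justified by the argument you give. When reducing $h_j$ by a previously-added $g = h_m$, all one knows in general is $\val_\r(c_v \X^v) > 0$ (from $\LT_\r(h_j) < \LT_\r(h_m)$ together with divisibility of leading terms); Lemma~\ref{lem:ecarts_vals_degre_s_r} applies only when the écart hypotheses on $g$ are satisfied, and the selection rule does not guarantee them before the stabilization rank $N_1$ of Proposition~\ref{prop:terminaison_convergence_WNF_surconvergent}. Since $\val_\s \leq \val_\r$ on terms (as you yourself note), the inequality $\val_\s(c_v \X^v) > 0$ can genuinely fail at early steps. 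Fortunately this strengthening is unnecessary: the paper keeps (ii) with $\val_\r$ exactly as in Lemma~\ref{lem:correctness}, which already yields invertibility of $\mu$ in $\KX[\r]$, while the $\s$-convergence of $(\mu_j)$ and $(u_{i,j})$ in the non-terminating case follows from your separate (and correct) argument that the increments eventually have $\val_\s \to +\infty$. With (ii) reverted to $\val_\r$, your proof is complete and essentially identical to the paper's.
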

\begin{proof}
\textit{Mutatis mutandis},
the loop invariant in
Lemma \ref{lem:correctness}
is still valid.
When $f$ does not reduce to
zero by $g_1,\dots,g_s$, there is no difficulty
as Algorithm \ref{algo: Mora's WNF for overconvergence}
terminates in a finite
number of steps, and
$\mu,g_1,\dots,g_s$
are polynomials,
with $\mu$ invertible in
$\KX[\r].$
When $f$ reduces to zero,
we proved in Lemma \ref{lem:ecarts_vals_degre_s_r},
that $\val_s(h_m)$
is eventually increasing
and going to $+\infty.$
We showed in the proof
of Prop. \ref{prop:terminaison_convergence_WNF_surconvergent}
that eventually, $T$
is constant.
It then proves that,
for the $g$ on Line 4,
for $c_\nu x^\nu=
\left(\frac{\LT_\r (h_m)}{\LT_\r(g)} \right),$
then $\val_\s (c_\nu x^\nu) \rightarrow+\infty.$
This is enough to prove
that the $\mu,u_1,\dots,u_s$
such that $\mu f= \sum_i u_i g_i$
are in $\KX[\s]$
as expected.
\end{proof}

\begin{rmk}
Section \ref{sec:buchberger}
can be extended
with (almost) no modification
to compute
GB in $\KX[\s]$ of an $\s$-convergent
ideal of $\KX[\r].$
One just needs to replace $K[\X]$
by $\KX[\s]$ and use
Algo. \ref{algo: Mora's WNF for overconvergence}
in Buchberger's algorithm.
\end{rmk}

\section{Universal Gröbner basis}

In this Section, we prove that 
a polynomial ideal
can only have a finite
number of distinct initial ideals
for varying log-radii $\r.$
To do so, we first prove
the result for homogeneous
ideals by adapting the classical
proof for polynomial ideals and then use
homogenization
to generalize the result to
non-homogeneous ideals.

\subsection{Homogeneous ideal}

The classical proof that a polynomial
ideal has only finitely many initial
ideals from page 427 of \cite{Cox05} (see also \cite{Sturmfels})
can be adapted to our setting
by relying on the following Lemma.

\begin{lem}
If $I \subset K[\X]$ is a homogeneous ideal,
if $\r \in \Q^n,$
and if $F=(f_1,\dots,f_s) \in I^s$ are homogeneous
polynomials which do not form 
a GB of $I_\r$, then there exists
some homogeneous polynomial $g \in I$
such that no term of $g$ is
divisible by any of the $\LT_\r(f_i)$'s. \label{lem:reduction_homogeneous_pol}
\end{lem}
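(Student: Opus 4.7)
The plan is to emulate the classical proof for polynomial ideals by a pure linear-algebra argument in one homogeneous degree, bypassing the fact that standard Tate reduction need not terminate even on polynomials (so one cannot simply ``fully reduce'' a witness to strip off its bad terms).

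First I would extract a homogeneous witness inside $I$. Since $F$ is not a GB of $I_\r$, there exists $f \in I_\r$ with $\LT_\r(f) \notin \langle \LT_\r(f_1), \ldots, \LT_\r(f_s)\rangle$. Writing $f = \sum_j p_j h_j$ with $p_j \in I$ homogeneous and $h_j \in \KX[\r]$, then expanding each $h_j$ as a series in monomials and regrouping by total degree, one gets $f = \sum_d f_d$ with each $f_d \in I \cap K[\X]_d$ a genuine polynomial (only finitely many summands contribute in any fixed degree). If $d_0$ is the total degree of the monomial $\LM_\r(f)$, then $\LT_\r(f)$ is a term of $f_{d_0}$ and, being already the $<_\r$-maximum over all terms of $f$, it is also the $<_\r$-maximum of $f_{d_0}$. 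Thus $v := f_{d_0}$ is a homogeneous polynomial in $I$ with $\LM_\r(v) = \LM_\r(f)$ not divisible by any $\LM_\r(f_i)$.

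Next I would run a dimension count in $K[\X]_{d_0}$. Set $V := I \cap K[\X]_{d_0}$ and $W := \langle \LM_\r(f_1),\dots,\LM_\r(f_s)\rangle \cap K[\X]_{d_0}$, both finite-dimensional over $K$. Because each $f_i$ is homogeneous and lies in $I$, for every monomial $m$ of degree $d_0 - \deg f_i$ one has $m f_i \in V$ with $\LM_\r(m f_i) = m \cdot \LM_\r(f_i)$; this gives the inclusion $W \subseteq \mathrm{in}_\r(V) := \mathrm{span}\{\LM_\r(w) : w \in V,\, w \neq 0\}$. The witness $v$ above shows this inclusion is strict. The standard identity $\dim \mathrm{in}_\r(V) = \dim V$, proved by Gaussian elimination on any basis of $V$ to produce a basis with distinct leading monomials, then yields $\dim V > \dim W$.

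Finally, decompose $K[\X]_{d_0} = W \oplus W^c$, where $W^c$ is spanned by the degree-$d_0$ monomials not divisible by any $\LM_\r(f_i)$, and consider the projection $\pi : K[\X]_{d_0} \twoheadrightarrow W$. Applying rank--nullity to $\pi|_V : V \to W$ gives $\dim \ker(\pi|_V) \geq \dim V - \dim W > 0$, so any nonzero $g$ in that kernel is a homogeneous polynomial of $I$ whose support lies entirely in $W^c$, i.e.\ no term of $g$ is divisible by any $\LT_\r(f_i)$. The only delicate point is the dimension-preservation step, which I expect to cause no trouble: once we restrict to a fixed total degree $d_0$, the order $<_\r$ induces a genuine total (hence well) order on the finite set of degree-$d_0$ monomials, so the iterative reduction of a basis to one with distinct leading monomials terminates.
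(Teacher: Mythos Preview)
Your proof is correct and follows essentially the same approach as the paper: both restrict to a single homogeneous degree $d_0$ and finish by finite-dimensional linear algebra there. The paper invokes density of $I$ in $I_\r$ to find the homogeneous witness and then runs a tropical row-echelon algorithm on the degree-$d_0$ Macaulay matrix to strip off the bad terms; your explicit degree-decomposition of a representation $f=\sum p_j h_j$ together with the rank--nullity argument is a slightly more self-contained packaging of the same linear-algebra step.
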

\begin{proof}
Since $F$ is not a GB of $I_\r$,
there exists some term $c x^\alpha \in \LT_\r(I_\r)$
such that $c x^\alpha \notin \left\langle \LT_\r(f_1), \dots, \LT_\r(f_s) \right\rangle.$
By the density of $I$ in $I_\r$ there is some polynomial
$h \in I$ such that $\LT_\r(h) = c x^\alpha.$
Since $I$ is homogeneous, we can assume that so is $h$.

By performing the tropical row-echelon algorithm
of \cite{Vaccon:2015} (Algorithm 1) on a Macaulay matrix
consisting of $h$ and the multiples
of the elements of $F$ of degree $\deg (h)$,
we obtain $g$ such that no term of $g$ is
divisible by any of the $\LT_\r(f_i)$'s.
\end{proof}

Using linear algebra along
the same lines, 
we get the existence
of polynomial reduced Gröbner
bases.

\begin{lem}
If $I \subset K[\X]$ is a homogeneous ideal,
if $\r \in \Q^n,$
then there exists
$G$ a reduced Gröbner basis
of $I_\r$
made of finitely many homogeneous polynomials
of $I$. \label{lem:rGB_for_homogeneous_polynomaisl}
\end{lem}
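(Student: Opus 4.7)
The plan is to first produce, via Algorithm~\ref{algo: Buchberger with WNF}, a finite Gröbner basis of $I_\r$ consisting of homogeneous polynomials of $I$, then minimize it, and finally tail-reduce each of its elements by working inside a finite-dimensional $K$-vector space of homogeneous polynomials.

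First, I would observe that every step of Algorithm~\ref{algo: Buchberger with WNF} preserves homogeneity. Indeed, the S-polynomial of two homogeneous polynomials $f,g$ is homogeneous of degree $\deg f + \deg g - \deg\gcd(\LT_\r(f),\LT_\r(g))$, and each reduction step $h \leftarrow h - c\X^\alpha g$ inside Algorithm~\ref{algo: Mora's WNF} enforces $c\X^\alpha \LT_\r(g) = \LT_\r(h)$, which forces $\deg(c\X^\alpha g) = \deg h$ when $g$ and $h$ are homogeneous. Running the algorithm on homogeneous generators of $I$ and appealing to Corollary~\ref{corr:finite_GB_made_of_pol} therefore yields a finite Gröbner basis $G = \{g_1,\ldots,g_s\}$ of $I_\r$ made of homogeneous polynomials of $I$.

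Next, I would discard from $G$ any $g_i$ whose leading monomial is divisible by $\LM_\r(g_j)$ for some $j \neq i$, producing a minimal Gröbner basis whose leading monomials are pairwise incomparable. For each surviving $g_i$, set $d_i = \deg(g_i)$ and consider the finite-dimensional $K$-vector space $V_i$ spanned by $g_i$ together with all products $\X^\beta g_j$ for which $j \neq i$ and $\deg(\X^\beta g_j) = d_i$. Applying the tropical row-echelon procedure of~\cite{Vaccon:2015} to the associated Macaulay matrix, in the same spirit as the proof of Lemma~\ref{lem:reduction_homogeneous_pol}, one extracts a homogeneous polynomial $\tilde g_i \in I$ of degree $d_i$ with $\LT_\r(\tilde g_i) = \LT_\r(g_i)$ and whose other terms are not divisible by any $\LT_\r(g_j)$, $j \neq i$. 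Normalizing by $\LC_\r(\tilde g_i) \in K^\times$ then gives the desired reduced Gröbner basis.

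The hard part will be the tail-reduction step. A naive iteration of subtractions of the form $g_i \leftarrow g_i - (c\X^\alpha / \LT_\r(g_j)) \cdot g_j$ to kill bad terms in the tail is at risk of non-termination in the Tate framework, because $<_\r$ is not a well-order even when restricted to the (finite) set of monomials of a fixed total degree: the terms $\pi^k\X^\alpha$, $k \geq 0$, form an infinite $<_\r$-descending chain. Confining the computation to the finite-dimensional $K$-subspace $V_i$, where tropical Gaussian elimination on a finite matrix terminates unconditionally and returns only $K$-linear combinations of the input polynomials (hence elements of $I$), is the main device that circumvents this obstruction.
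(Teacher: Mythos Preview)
Your proposal is correct and follows essentially the same approach as the paper: first obtain a polynomial Gröbner basis via Corollary~\ref{corr:finite_GB_made_of_pol}, observe that homogeneity is preserved, and then inter-reduce each element by tropical row-echelon on the degree-$d_i$ Macaulay matrix built from the multiples of the remaining $g_j$'s, exactly as in~\cite{Vaccon:2015} and in the spirit of Lemma~\ref{lem:reduction_homogeneous_pol}. Your explicit justification of homogeneity preservation in the WNF steps and your remark on why naive tail-reduction risks non-termination are more detailed than the paper, but the substance is the same.
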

\begin{proof}
Thanks to Corollary
\ref{corr:finite_GB_made_of_pol},
we get $H$, a GB
or $I_\r$ made of
polynomials of $I.$
Since $I$ is homogeneous
we can assume that in addition,
they are all homogeneous.
Then again, for any $g \in G$, 
we can perform inter-reduction
by performing the tropical row-echelon algorithm
of \cite{Vaccon:2015} (Algorithm 1) on a Macaulay matrix
consisting of $g$ and the multiples
of the elements of $G \setminus \{g\}$ of degree $\deg (g)$.
This is enough to conclude.
\end{proof}

\begin{prop}
Let $I \subset K[\X]$ be a homogeneous ideal.
Then the set $Terms(I):=\{\LT(I_\r) \: \diagup \: \r \in \Q^n \}$
is finite. \label{prop:finitess_of_LTs_of_homogeneous_ideal}
\end{prop}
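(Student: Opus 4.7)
The plan is to show that every initial ideal $\LT_\r(I_\r)$ can be regarded as a homogeneous monomial ideal of the polynomial ring $K[\X]$, that all these monomial ideals share the same Hilbert function as $K[\X]/I$, and then to appeal to the classical fact that only finitely many homogeneous monomial ideals of $K[\X]$ have a given Hilbert function.

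First, for any $\r \in \QQ^n$, I would apply Lemma~\ref{lem:rGB_for_homogeneous_polynomaisl} to obtain a reduced Gröbner basis of $I_\r$ consisting of homogeneous polynomials of $I$. The leading terms of these polynomials are then homogeneous monomials, and they generate $\LT_\r(I_\r)$ in the term monoid; thus $\LT_\r(I_\r)$ corresponds naturally to a homogeneous monomial ideal $M_\r$ of $K[\X]$.

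Second, I would establish the identity $I_\r \cap K[\X]_d = I \cap K[\X]_d$ for every degree $d$. Any element of the left-hand side may be written as $\sum_i h_i f_i$ with the $f_i$ a finite set of homogeneous generators of $I$ (using noetherianity of $K[\X]$) and $h_i \in \KX[\r]$; replacing each $h_i$ with its homogeneous component of degree $d - \deg f_i$ produces an honest polynomial (since each graded piece of $K[\X]$ is finite dimensional), yielding a $K[\X]$-combination of the $f_i$'s. Combined with the standard echelon-form argument, which shows that the number of monomials of degree $d$ in $M_\r$ equals $\dim_K (I_\r \cap K[\X]_d)$, this gives that the Hilbert function of $K[\X]/M_\r$ equals that of $K[\X]/I$, independently of $\r$.

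Finally, I would invoke the classical fact that only finitely many homogeneous monomial ideals of $K[\X]$ share a prescribed Hilbert function: by Gotzmann regularity, or more elementarily by Macaulay's bound, such an ideal is generated in degrees bounded uniformly in terms of the Hilbert function, leaving only finitely many choices of generators among the monomials of bounded degree. The main obstacle will be the careful verification of the graded-piece identity $I_\r \cap K[\X]_d = I \cap K[\X]_d$; this step relies crucially on the homogeneity of $I$, which forces the Tate series appearing in relations to collapse to genuine polynomials, so that the completion $\KX[\r]$ plays no role in bounded degree.
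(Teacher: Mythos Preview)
Your argument is correct, but it follows a genuinely different route from the paper's. The paper adapts the classical pigeonhole-and-ascending-chain proof from \cite{Cox05}: assuming infinitely many initial ideals, it iteratively applies Lemma~\ref{lem:reduction_homogeneous_pol} to build an infinite strictly increasing chain $\langle m_1\rangle \subsetneq \langle m_1,m_2\rangle \subsetneq \cdots$ in $\TTX$, contradicting the noetherianity result of \cite[Prop.~2.8]{CVV}. Your approach instead passes through Hilbert functions: you identify each $\LT_\r(I_\r)$ with a homogeneous monomial ideal $M_\r \subset K[\X]$, show via the graded-piece identity $I_\r\cap K[\X]_d = I\cap K[\X]_d$ that all the $M_\r$ share the Hilbert function of $K[\X]/I$, and then invoke the classical fact that only finitely many monomial ideals have a prescribed Hilbert function.

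What each buys: the paper's proof stays entirely within the toolkit already set up (Lemma~\ref{lem:reduction_homogeneous_pol} plus noetherianity of $\TTX$), and mirrors the standard polynomial argument closely. Your proof is conceptually clean but imports a nontrivial piece of commutative algebra for the last step; the bound on generator degrees really needs Gotzmann persistence (or Bigatti--Hulett--Pardue), not just Macaulay's inequality alone, so you should be precise about which result you invoke. Two further remarks: (i) your ``standard echelon-form argument'' in the valuated setting is exactly the tropical row-echelon of \cite{Vaccon:2015} used in Lemma~\ref{lem:reduction_homogeneous_pol}, so that step is sound but worth naming; (ii) when you say the leading terms are ``homogeneous monomials'', you mean their images in $\TTX[\r]\cong\{\text{monomials}\}$---a harmless abuse, but worth stating once.
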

\begin{proof}
Suppose that $\Terms(I)$ is infinite.
For any $M \in \Terms(I)$, we write $\leq_M$ for a term
order defined by an $\r$ such that $\LT(I_\r)=M.$
Let $\Sigma := \{ \leq_M \: \diagup \: M \in \Terms(I) \}.$
Our assumption states that $\Sigma$ is infinite.

Let $f_1 \in I$ be a homogeneous polynomial.
Since $f_1$ has finitely many terms,
by the pigeonhole principle, there is an infinite set 
$\Sigma_1 \subset \Sigma$ and a term $m_1$ of 
$f_1$ such that for all $\leq_M \in \Sigma_1,$ $\LT_{\leq_M}(f_1)=m_1.$
Suppose that for some $\leq_1 \in \Sigma_1$
defined by some $\r_1$,
$(f_1)$ is a GB of $I_{\r_1}$.
Then, let $\leq \in \Sigma_1$ be
defined by some $\r$.
We prove that $(f_1)$ is then a GB of $I_\r.$
Indeed, by Lemma \ref{lem:reduction_homogeneous_pol},
if    $(f_1)$ is not a GB of $I_\r$
there is some $h \in I$ such that
no term of $h$ is divisible by $\LT_{\leq}(f_1).$
Since $\LT_{\leq}(f_1)=\LT{\leq_1}(f_1)$
and $(f_1)$ is a GB of $I_{\r_1}$ this is
a contradiction.
Consequently, for any $\leq \in \Sigma_1$ 
defined by some $\r$, $(f_1)$ is a GB of $I_\r$
with $\LT_\leq (f_1)=m_1$.
However, this can not be the case as our assumption
was that there are infinitely many
elements in $\Sigma_1$
all defining distinct $\LT$'s for $I$.
Therefore,  $(f_1)$ is not a GB of $I_\r.$

By Lemma $\ref{lem:reduction_homogeneous_pol}$
there is some homogeneous $f_2 \in I$
such that no term of $f_2$ is divisible by $m_1.$
Then again, since $f_2$ has finitely many terms,
by the pigeonhole principle, there is an infinite set 
$\Sigma_2 \subset \Sigma_1$ and a term $m_2$ of 
$f_2$ such that for all $\leq_M \in \Sigma_2,$ $\LT_{\leq_M}(f_2)=m_2$ (and also since $\Sigma_2 \subset \Sigma_1$,
$\LT_{\leq_M}(f_1)=m_1$).

The same argument as above shows that 
for any $\leq \in \Sigma_2$ 
defined by some $\r$, $(f_1,f_2)$ is not GB of $I_\r.$
Then again, by Lemma $\ref{lem:reduction_homogeneous_pol}$
there is some homogeneous $f_3 \in I$
such that no term of $f_3$ is divisible by any of $(m_1,m_2).$
Since $f_3$ has finitely many terms,
by the pigeonhole principle, there is an infinite set 
$\Sigma_3 \subset \Sigma_2$ and a term $m_3$ of 
$f_3$ such that for all $\leq_M \in \Sigma_3,$ $\LT_{\leq_M}(f_3)=m_3$ (and also since $\Sigma_3 \subset \Sigma_2$,
$\LT_{\leq_M}(f_1)=m_1, \LT_{\leq_M}(f_2)=m_2$).

Continuing the same way, we produce a descending chain of
infinite subsets $\Sigma \supset \Sigma_1 \supset \Sigma_2 \supset \Sigma_3 \supset \dots$
and an infinite strictly ascending chain of ideals
$\left\langle m_1 \right\rangle \subset \left\langle m_1,m_2 \right\rangle \subset \left\langle m_1,m_2,m_3 \right\rangle \subset \dots$
 in $\TTX.$
This contradicts Prop. 2.8 of \cite{CVV} and concludes the proof.
\end{proof}

\begin{thm}
Let $I \subset K[\X]$ be 
a homogeneous ideal.
Then there exists a finite set $G \subset I\subset K[\X]$
made of homogeneous 
polynomials
which is a \textit{universal analytic Gröbner basis}
of $I$: for any $\r \in \Q^n,$
$G$ is a GB of $I_r.$ \label{thm:universal_analytic_GB_homogeneous_case}
\end{thm}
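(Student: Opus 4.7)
The plan is to leverage the finiteness of $\Terms(I)$ established in Proposition~\ref{prop:finitess_of_LTs_of_homogeneous_ideal}. Write $\Terms(I) = \{M_1,\dots,M_N\}$ and fix $\r_1,\dots,\r_N \in \Q^n$ such that $\LT(I_{\r_j}) = M_j$ for each $j$. By Lemma~\ref{lem:rGB_for_homogeneous_polynomaisl}, for each $j$ there is a reduced Gröbner basis $G_j$ of $I_{\r_j}$ consisting of finitely many homogeneous polynomials of $I$. I would then take $G := \bigcup_{j=1}^N G_j$, which is a finite set of homogeneous polynomials contained in $I$.

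The key claim is that for each $j$ and every $\r \in \Q^n$ with $\LT(I_\r) = M_j$, the subset $G_j$ is already a Gröbner basis of $I_\r$; from this it follows immediately that $G$ is a GB of $I_\r$ for every $\r \in \Q^n$. To establish the claim I would rely on the structural property of reduced Gröbner bases: for each $g \in G_j$, the leading monomial is a minimal generator $m_g$ of $M_j$, and every other monomial appearing in $g$ lies outside $M_j$. Since $g \in I \subseteq I_\r$, we have $\LT_\r(g) \in \LT_\r(I_\r) = M_j$, but among the terms of $g$ only the one with monomial $m_g$ has monomial in $M_j$; therefore $\LT_\r(g)$ agrees with $m_g$ in $\TTX$. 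Consequently $\LT_\r(G_j)$ generates $M_j = \LT_\r(I_\r)$, which is the Gröbner basis property.

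The main obstacle I anticipate is to verify that the reduced Gröbner basis provided by Lemma~\ref{lem:rGB_for_homogeneous_polynomaisl} really does satisfy the structural property used above, namely that for each element the leading monomial is the only one lying in the initial ideal. This follows from the inter-reduction step carried out in the proof of that lemma via the tropical row-echelon algorithm: after inter-reduction, no non-leading term of any $g \in G_j$ is divisible by $\LT(g')$ for any other $g' \in G_j$, which---since $\LT(G_j)$ generates $M_j$---is exactly the required property. Once this observation is in place, the conclusion is a routine assembly of the pieces, and no further analytic input is needed beyond the finiteness of $\Terms(I)$ and the existence of polynomial reduced Gröbner bases.
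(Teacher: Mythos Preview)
Your argument is correct and follows essentially the same route as the paper: both concatenate reduced Gr\"obner bases over the finitely many initial ideals supplied by Proposition~\ref{prop:finitess_of_LTs_of_homogeneous_ideal}. The paper phrases the key step as \emph{uniqueness} of the reduced GB for a fixed initial ideal (showing $g_1-g_2\in I$ has no monomial in $M_j$, hence $g_1=g_2$), whereas you argue directly that $G_j$ remains a GB under any $\r$ with $\LT(I_\r)=M_j$; these are two sides of the same observation.

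One small point to tighten in your last paragraph: you verify that non-leading terms of $g\in G_j$ avoid $\LT(g')$ for every \emph{other} $g'\in G_j$, but to conclude that the non-leading monomials lie outside $M_j$ you must also rule out divisibility by $\LT(g)$ itself. In the Tate ordering this is not automatic (e.g.\ $X+pX^2$ with $\r=0$), but here homogeneity saves you: all terms of $g$ have the same degree, so a non-leading monomial divisible by $\LM_{\r_j}(g)$ would have to equal it. The paper's proof uses the same fact implicitly when asserting that $g_1-g_2$ has no monomial in $\langle\LT(G_i)\rangle$.
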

\begin{proof}
By Prop \ref{prop:finitess_of_LTs_of_homogeneous_ideal},
there are only finitely
many initial ideals possible.
We prove that for two
term-orders $\leq_1$
and $\leq_2$ (defined
by $\r_1$ and $\r_2$),
if they define the same
initial ideal, then
they have the same
reduced Gröbner basis.
Indeed, let $G_1$
and $G_2$ be the 
reduced Gröbner bases
given by Lemma \ref{lem:rGB_for_homogeneous_polynomaisl}.
They have the same $\LT$'s.
Let $g_1 \in G_1$
and $g_2 \in G_2$
having a common $\LT$.
Then $g_1-g_2 \in I$
with no monomial
divisible by any of the
$\LT(G_i)$'s.
Hence $g_1=g_2$, and
$G_1$ and $G_2$ are equal
up to permutation.

Consequently, all 
term orders giving rise
to the same initial
ideal share the same
reduced GB.
Consequently, only
a finite amount of reduced
GB for the $I_\r$'s are
possible.
By concatening all of them,
we obtain the desired
universal analytic Gröbner 
basis.
\end{proof}

\subsection{Non-Homogeneous ideal}

\begin{lem}
Let $I \subset K[\X]$ be a polynomial ideal and 
$\r \in \Q^n.$
Let $(h_1,\dots,h_s)$ be a finite Gröbner basis of $(I^*)_{(r,0)} \subset K \left\lbrace \X,t ; \r,0 \right\rbrace$
made of homogeneous polynomials of $I^*$ (hence in $K[\X,t]$).
Then $(h_{1,*},\dots,h_{s,*})$ is a Gröbner basis of $I_r.$ \label{lem:dehomogenization_of_GB}
\end{lem}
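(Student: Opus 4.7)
The plan is to verify the two ingredients of the Gröbner basis definition recalled in Section~\ref{subsec:Tate_alg}: that each $h_{i,*}$ lies in $I_\r$, and that every leading term of an element of $I_\r$ is divisible by one of the $\LT_\r(h_{i,*})$'s. The containment is immediate: writing $h_i = \sum_j q_j\, f_j^*$ with $q_j \in K[\X,t]$ and $f_j \in I$, specialization at $t=1$ yields $h_{i,*} = \sum_j q_j(\X,1)\, f_j \in I \subseteq I_\r$, using that $(f_j^*)_* = f_j$.

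The core step handles the polynomial case. Given $p \in I \cap K[\X]$, its homogenization $p^*$ lies in $I^* \subseteq (I^*)_{(\r,0)}$, so the GB hypothesis provides an index $i$ for which $\LT_{(\r,0)}(h_i)$ divides $\LT_{(\r,0)}(p^*)$ in $K[\X,t]$. Since substituting $t=1$ preserves term divisibility, $(\LT_{(\r,0)}(h_i))_*$ divides $(\LT_{(\r,0)}(p^*))_*$ in $K[\X]$. By Lemma~\ref{lem:dehomogenization_and_LT}, both sides are exactly the desired leading terms in $K[\X]$, so $\LT_\r(h_{i,*}) \mid \LT_\r(p)$.

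To extend to an arbitrary $f \in I_\r$, I would invoke Corollary~\ref{corr:finite_GB_made_of_pol}: since $I_\r$ is a polynomial ideal, it admits a polynomial GB $\tilde G \subset I \cap K[\X]$. Some $\tilde g \in \tilde G$ satisfies $\LT_\r(\tilde g) \mid \LT_\r(f)$, and applying the polynomial case to $\tilde g$ produces an index $i$ with $\LT_\r(h_{i,*}) \mid \LT_\r(\tilde g) \mid \LT_\r(f)$. The only delicate point is that dehomogenization could in principle alter leading terms; this is neutralized by Lemma~\ref{lem:dehomogenization_and_LT}, which holds precisely because Definition~\ref{defn:term_order_on_homogenization} was designed so that on terms of equal valuation and equal total degree the $t$-degree plays no role and comparison reduces to $<_m$ on the $\X$-parts, making $<_{(\r,0)}$ a ``homogenization-friendly'' extension of $<_\r$.
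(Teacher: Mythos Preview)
Your proof is correct and follows essentially the same route as the paper's own argument: membership of the $h_{i,*}$'s in $I$ via dehomogenizing a representation in $I^*$, the divisibility check for polynomial elements of $I$ via homogenizing and applying Lemma~\ref{lem:dehomogenization_and_LT}, and the reduction to polynomial elements using Corollary~\ref{corr:finite_GB_made_of_pol}. The paper phrases the last step more tersely (``by Corollary~\ref{corr:finite_GB_made_of_pol}, it is enough to check that for any $f\in I$, \dots''), while you spell it out through an auxiliary polynomial GB $\tilde G$, but the content is identical.
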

\begin{proof}
Firstly, due to being
dehomogenization of elements of $I^*$,
the $h_{i,*}$'s are in $I$.

Secondly,
by Corollary \ref{corr:finite_GB_made_of_pol}, it is enough
to check that for any $f \in I$,
$\LT_\r(f)$ is divisible by one of the $\LT_\r(h_{i,*})$'s.

Let $f \in I.$ Then $f^* \in I^* \subset (I^*)_{(r,0)}$
so there is some $i$ such that $\LT_{(\r,0)}(h_i)$
divides $\LT_{(\r,0)}(f^*).$
Then thanks to Lemma \ref{lem:dehomogenization_and_LT},
$\LT_\r (f)=\LT_{(\r,0)}(f^*)_*$,
$\LT_\r (h_{i,*})=\LT_{(\r,0)}(h_i)_*$,
and monomial divisibility is preserved by dehomogenization.
So $\LT_\r (h_{i,*})$ divides $\LT_\r (f)$
and the proof is complete.
\end{proof}

We can then prove the main theorem of this section.

\begin{thm}
Let $I \subset K[\X]$ be an ideal.
Then the set $Terms(I):=\{\LT(I_\r) \: \diagup \: \r \in \Q^n \}$
is finite.
\end{thm}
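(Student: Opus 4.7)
The strategy is to lift the problem to the homogeneous setting, where Proposition \ref{prop:finitess_of_LTs_of_homogeneous_ideal} already gives the finiteness we want, and then transfer the result back to $I$ via dehomogenization using Lemma \ref{lem:dehomogenization_of_GB}.

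Concretely, I would argue as follows. The homogenization $I^* \subset K[\X,t]$ is a homogeneous polynomial ideal, so by Proposition \ref{prop:finitess_of_LTs_of_homogeneous_ideal} applied to $I^*$, the set
\[
\Terms(I^*) = \bigl\{ \LT\bigl((I^*)_{(\r,0)}\bigr) \: : \: \r \in \Q^n \bigr\}
\]
is finite (we only consider the log-radii of the form $(\r,0)$, which is even a smaller family). It is therefore enough to show that the initial ideal $\LT(I_\r)$ is completely determined by $\LT\bigl((I^*)_{(\r,0)}\bigr)$; then the map $\r \mapsto \LT(I_\r)$ factors through a map from the finite set $\Terms(I^*)$, and hence has finite image.

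To establish this last point, fix $\r \in \Q^n$ and let $(h_1,\dots,h_s)$ be a Gröbner basis of $(I^*)_{(\r,0)}$ consisting of homogeneous polynomials of $I^*$, which exists by Lemma \ref{lem:rGB_for_homogeneous_polynomaisl} applied to the homogeneous ideal $I^*$. By Lemma \ref{lem:dehomogenization_of_GB}, the dehomogenizations $(h_{1,*},\dots,h_{s,*})$ form a Gröbner basis of $I_\r$, and by Lemma \ref{lem:dehomogenization_and_LT} we have $\LT_\r(h_{i,*}) = \bigl(\LT_{(\r,0)}(h_i)\bigr)_*$ for each $i$. Hence $\LT(I_\r)$ is the monomial ideal obtained by dehomogenizing (i.e.~setting $t=1$ in) any generating set of $\LT\bigl((I^*)_{(\r,0)}\bigr)$. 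This produces a well-defined surjection from $\Terms(I^*)$ onto $\Terms(I)$, and finiteness of $\Terms(I)$ follows.

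The only subtle point is the verification that dehomogenization of the initial ideal is well defined at the level of ideals and does not depend on the choice of homogeneous generating set of $\LT\bigl((I^*)_{(\r,0)}\bigr)$; this is straightforward since dehomogenization of monomials is multiplicative and sends the monomial ideal $\bigl\langle \LT_{(\r,0)}(h_i) \bigr\rangle$ to $\bigl\langle \LT_\r(h_{i,*}) \bigr\rangle$. I do not anticipate any genuine obstacle here: the real work has already been done in the homogeneous case and in the dehomogenization lemma.
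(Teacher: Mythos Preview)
Your proposal is correct and follows essentially the same approach as the paper: apply Proposition~\ref{prop:finitess_of_LTs_of_homogeneous_ideal} to the homogenized ideal $I^*$, then use Lemma~\ref{lem:dehomogenization_of_GB} (together with Lemma~\ref{lem:dehomogenization_and_LT}) to produce a surjection $\Terms(I^*)\twoheadrightarrow\Terms(I)$. The paper's proof is just a more compressed version of exactly this argument; your additional remarks on well-definedness of the dehomogenized initial ideal are correct but not strictly necessary.
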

\begin{proof}
Thanks to Lemma \ref{lem:dehomogenization_of_GB}, there
is a surjection from $\Terms(I^*)$
to $\Terms(I).$
The first set is finite thanks to Proposition \ref{prop:finitess_of_LTs_of_homogeneous_ideal},
so the second is also, which concludes the proof.
\end{proof}

We can also obtain the existence of 
universal Gröbner bases for any polynomial
ideal in $K[\X].$
 
\begin{thm}
Let $I \subset K[\X]$ be an ideal.
Then there exists a finite set $G \subset I\subset K[\X]$
which is a \textit{universal analytic Gröbner basis}
of $I$: for any $\r \in \Q^n,$
$G$ is a GB of $I_r.$ \label{cor:universal_GB}
\end{thm}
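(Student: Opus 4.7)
The plan is to reduce to the homogeneous case via homogenization and then use Lemma \ref{lem:dehomogenization_of_GB} uniformly in $\r$. Concretely, I would start from the homogenization $I^* \subset K[\X,t]$, which is a homogeneous ideal, and apply Theorem \ref{thm:universal_analytic_GB_homogeneous_case} to produce a finite family $H = (h_1, \ldots, h_s)$ of homogeneous polynomials of $I^*$ which is a universal analytic Gröbner basis for $I^*$: for every choice of log-radii $(\r,0) \in \Q^{n+1}$, the family $H$ is a GB of $(I^*)_{(\r,0)}$ in $K\{\X,t;\r,0\}$.

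Next, I would dehomogenize: set $G := \{h_{1,*}, \ldots, h_{s,*}\} \subset K[\X]$. Each $h_{i,*}$ lies in $I$ since the dehomogenization of an element of $I^*$ lies in $I$. To check that $G$ is a GB of $I_\r$ for an arbitrary $\r \in \Q^n$, I would invoke Lemma \ref{lem:dehomogenization_of_GB}: it takes as input a GB of $(I^*)_{(\r,0)}$ made of homogeneous polynomials of $I^*$ and produces a GB of $I_\r$ by dehomogenizing. Since $H$ plays that role for every $\r$ simultaneously, the single finite set $G$ obtained by dehomogenizing once is a GB of $I_\r$ for every $\r \in \Q^n$. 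This is exactly the universal analytic Gröbner basis property claimed.

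The only subtlety — and the step I would be most careful with — is that the hypotheses of Lemma \ref{lem:dehomogenization_of_GB} must be genuinely satisfied for all $\r$ by the same set $H$, which is exactly what the homogeneous universal analytic GB from Theorem \ref{thm:universal_analytic_GB_homogeneous_case} delivers, and that the extension of $<_\r$ to the block order $<_{(\r,0)}$ on $K\{\X,t;\r,0\}$ of Definition \ref{defn:term_order_on_homogenization} is the one used throughout. Given these two points, no further computation is needed, and the theorem follows by direct chaining of the two preceding results.
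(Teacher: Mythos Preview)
Your proposal is correct and follows essentially the same approach as the paper: homogenize, invoke Theorem~\ref{thm:universal_analytic_GB_homogeneous_case} on $I^*$, and then dehomogenize via Lemma~\ref{lem:dehomogenization_of_GB} uniformly in~$\r$. The paper's own proof is the one-line version of exactly this argument; your additional care about the term order of Definition~\ref{defn:term_order_on_homogenization} being the one in play (and that the homogeneous universal GB covers in particular all log-radii of the form $(\r,0)$) is appropriate and does not deviate from the intended route.
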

\begin{proof}
Thanks to Lemma \ref{lem:dehomogenization_of_GB},
it is enough to dehomogenize
a universal analytic GB
of $I^*$
to obtain the desired
universal analytic GB of $I.$
\end{proof}

\subsection{New challenges}

One can relate the previous result
to the Remark 8.8 of \cite{Rabinoff} on the
foundations of computations in tropical analytic geometry,
on universal analytic GB
and on tropical bases.

We say that $F \subset I$ is a tropical
basis of $I$ if
for any $\r \in \QQ^n$:
there is $g \in I$
such that $\val_\r(g)$
is reached by only one term
if and only if there is
$f \in F$ such that
$\val_\r(f)$
is reached by only one term.

It leaves us
with the following
challenges:

\begin{enumerate}
\item Give an algorithm
to compute a universal analytic
Gröbner basis of a polynomial
ideal.
\item Give an algorithm
to compute a tropical basis
of a polynomial ideal.
\item Generalize
universal analytic GB
to overconvergent
ideals
or to varying 
center of polydisks of
convergence. 
\end{enumerate}

We shall remark that in our
context, due to the fact
that we take the valuation
of the coefficients 
into account, then, contrary to
the classical case of
Gröbner fans for polynomials
over a field, 
the Gröbner complex
is in general not a cone.

\begin{table*}
\centering
\begin{tabular}{|l|l|l|c|c|c|c|c|c||c|}
\cline{4-10} 
\multicolumn{3}{c|}{\textbf{Timings (s)}} & \multicolumn{7}{|c|}{\textbf{Entry precision in $\Qp[\X]$ or $\Qp \left\lbrace \X ; (0,\dots,0) \right\rbrace$}}   \\ 
\hline 
\textbf{system} \vphantom{$2^{2^2}$} & $p$& \textbf{algo} &  $2^4$&$2^5$ &$2^6$&$2^7$ &$2^8$&$2^9$&$2^{20}$  \\ 
\hline 
\hline
\multirow{2}{*}{Cyclic 5} &\multirow{2}{*}{2} &  Mora  & $\infty$ & $\infty$ & $\infty$ & $\infty$ & $\infty$ & $\infty$& $\infty$\\
\cline{3-10}
& & Vapote &  0.86 & 1.0 & 1.5 & 2.3 & 3.8 & 7.2 & $\infty$\\
\hline

\multirow{2}{*}{Katsura 3} & \multirow{2}{*}{2} & Mora   & 0.031 & 0.047 & 0.031 & 0.063 & 0.047 & 0.032 & 0.5\\
\cline{3-10}
& & Vapote &   0.063 & 2.2 & 140 & 4500 & $\infty$ & $\infty$ & $\infty$\\
\hline
\multirow{2}{*}{Katsura 6} &\multirow{2}{*}{2} &  Mora &   1.2 & 0.98 & 0.94 & 1.0 & 1.1 & 1.0 & 2.3\\
\cline{3-10}
& & Vapote &  170 & $\infty$ & $\infty$ & $\infty$ & $\infty$ & $\infty$ & $\infty$\\
\hline

\end{tabular}
\caption{Precision and timing for Algo \ref{algo: Buchberger with WNF} and Vapote \cite{CVV2}. The log-radii is $(0,\dots,0).$
}
\label{table_des_timings}
\end{table*}

\newpage

\bibliographystyle{plain}

\newpage

\section{Appendix: timings}
\label{sec:appendix_timings}
We present here with Table \ref{table_des_timings} some timings for our toy implementation
of Algorithm \ref{algo: Buchberger with WNF}
acting on special systems in $\QQ[\X] \subset \QQ_2[\X] \subset \QQ_2 \left\lbrace \X ; 0,\dots,0 \right\rbrace.$
The $\infty$ symbols means that 12 hours were not enough for the algorithm to terminate.

Most of the time the algorithms of \cite{CVV,CVV2} vastly
outperforms our implementation (as seen in the Cyclic case).

However, this is not always the case
and with the Katsura systems, 
our implementation displays two remarkable features
of our algorithm:
\begin{itemize}
\item Reductions can be significantly faster: no problem
with reductions converging possibly slowly to zero \footnote{An example of such a reduction slowly converging to $0$ in the
algorithms of \cite{CVV, CVV2} is the reduction of $X$ by $X-pX^2$ for log-radii $0$, leading to intermediate remainders $pX^2,p^2X^3,\dots,p^k X^{k+1},\dots$.}
\item The dependency on the precision can be significantly
smaller than that of the algorithms of \cite{CVV,CVV2},
allowing in some cases many orders of magnitude
of additional digits in less time.
\end{itemize}


Please note 
the special shape
of the Katsura 6 system
in $\Qp[X_1,\dots,X_6]$
for $\r=(0,\dots,0)$
and $p=2$:
its defining polynomials
already contains 
the leading monomials
$X_1, X_2, X_4$,
explaining in part
why this computation
is not as hard as for
classical Gröbner bases.

One can try all
examples at \url{https://gist.github.com/TristanVaccon}.

\end{document}

\section{Berkovich universal Gröbner bases}

\todo{À reprendre et probablement
à réorganiser}

\subsection{Setting}

Let $\s \in \Q^n.$

\begin{defn}
The Berkovich analytification of
$\KX[\s]$, denoted by $\KX[\s]^{an},$ is the set of
the multiplicative semi-norms of
$\KX[\s]$, \textit{i.e.}
the mappings $\vert \cdot \vert : \: \KX[\s] \rightarrow \mathbb{R}$
such that for any $(f,g) \in \KX[\s]^2$:
\begin{itemize}
\item $\vert fg \vert = \vert f \vert \vert g \vert$,
\item $\vert f+g \vert \leq \max \left( \vert f \vert, \vert g \vert \right)$,
\item $\vert f \vert \leq \vert f \vert_\s$ with $\vert f \vert_\s=p^{- \val_\s(f)}.$
\end{itemize}
\end{defn}

The mappings $ f \mapsto \vert f(a) \vert_\s$ 
and $f \mapsto \sup_{B(a,\r)} \vert f \vert_\s$
for $a \in B(0,\s)$ and $\r \leq \s$
are examples of elements of $\KX[\s]^{an},$
but they do not cover all elements of $\KX[\s]^{an}$
in general.
We refer to \cite{Ben19} for a gentle introduction
to Berkovich analytification.

If $f = \sum_\alpha c_\alpha \X^\alpha \in \KX[\s]$
and if $x=\vert \cdot \vert \in \KX[\s]^{an},$
the third condition on the definition of semi-norms
implies that $\max_\alpha \vert c_\alpha \X^\alpha \vert$
is reached only a finite number of times.
We can then define $\LT_x(f)$: 
\begin{defn}
For $f = \sum_\alpha c_\alpha \X^\alpha \in \KX[\s]$
and $x=\vert \cdot \vert \in \KX[\s]^{an},$
we define $\LT_x(f)$ as the term reaching the maximum
for $\leq_m$ (our tie-breaking monomial ordering)
among the terms reaching $\max_\alpha \vert c_\alpha \X^\alpha \vert$.
\end{defn}

This definition is compatible with that of 
$\LT$'s on Tate algebras.

\begin{lem}
For $a \in B(0,\s)$ and $\r \leq \s$, 
and $x=f \mapsto \sup_{B(a,\r)} \vert f \vert_\s$
in $\KX[\s]^{an},$
then for any $f \in \KX[\s],$
\[\LT_x(f)=\LT_\r(f(\X+a)). \]
\end{lem}
\begin{proof}
It is true if $f$ only has one term.
For the general case, 
we can isolate $\LT_\r(f(\X+a))$
and use the compatibility of sums with semi-norms.
\todo{Sans doute à vérifier et développer...}
\end{proof}

In this section, we would like to
prove that for any ideal $I \subset \KX[\s],$
 there exists a finite universal analytic Gröbner
 basis $G$ of $I$ in the sense that
 for any $x \in \KX[\s]^{an},$
 $\LT_x(G)$ generates $\LT_x(I).$
 
\subsection{Useful lemmas}

\begin{lem}
If $I \subset \KX[\s]$ is an ideal,
if $x \in \KX[\s]^{an},$
and if $F=(f_1,\dots,f_s) \in I^s$ 
is such that $\LT_x(F)$ 
do not generate $\LT_x(I)$,
then there exists
some $g \in I$
such that no term of $g$ is
divisible by any of the $\LT_x(f_i)$'s. \label{lem:reduction_totale_pour_x_analytique}
\end{lem}
\begin{proof}
\todo{À voir entre les différents types de points...}
\end{proof}

\begin{lem}
If $I \subset \KX[\s]$ is an ideal,
if $x \in \KX[\s]^{an},$
then there exists $F=(f_1,\dots,f_s) \in I^s$ 
such that $\LT_x(F)$ 
generates $\LT_x(I)$,
and for all $i$, no term of $f_i$ is
divisible by any of the $\LT_x(f_j)$'s, $j \neq i$. \label{lem:reduction_GB_pour_x_analytique}
\end{lem}
\begin{proof}
\todo{À voir entre les différents types de points...}
\end{proof}

\begin{lem}
If $f \in \KX[\s],$
then the set 
$\left\lbrace \LT_x(f), \textrm{ for } x \in \KX[\s]^{an} \right\rbrace$ is finite.
\label{lem:finitess_LTx}
\end{lem}
\begin{proof}
Let $S(f):= \left\lbrace (\val_s(c_\alpha \X^\alpha),\alpha_1,\dots,\alpha_n) \textrm{ for } c_\alpha \X^\alpha \textrm{ a term of }f \right\rbrace.$
Up to replacing
$f$ by $c f$ for some $c \in K,$ we can assume that $S(f) \subset \left(\frac{1}{D} \N \right) \times \N^n$ 
for some $D \in \N.$
Hence, $S(f)$ only has a finite number of minimal elements
for the product order on $\left(\frac{1}{D} \N \right) \times \N^n.$
Let $M$ be the maximum of the first coordinate of those
minimal elements.

We prove that if $t=\LT_x(f)$ for some $x \in \KX[\s]^{an}$, 
then $\val_s(t) \leq M.$

Let us write $t=c_\alpha \X^\alpha$, and let
$u=c_\beta \X^\beta$ be a term of $f$
such that:
\begin{itemize}
\item $\val_\s(u) \leq \val_\s(t)$,
\item $\beta_i \leq \alpha_i$ for all $i$,
\item $\val_\s(u) \leq M.$
\end{itemize}  

\todo{Attention, là il y a la disjonction
de cas sur le type de point.
Je ne fais que les points de
type 2 et 3 centrés en zéro
pour l'instant.}

Let us assume that 
$x=\sup_{B(0, \r)}$.
 
Then $\r \leq \s$ and
$\val_\s(u) + (\s-\r)\cdot \beta
\leq \val_\s(t) + (\s-\r)\cdot \alpha,$
which implies that
$\val_\r(u)\leq \val_\r(t).$
Since $t=\LT_\r(f),$ it means this
is actually an equality, and then
$\val_\s(u)=\val_\s(t)\leq M.$

Consequently, all the 
$\val_\s$ of the $\LT_\r(f)$
are upper-bounded by $M.$
Since $f \in \KX[\s]$,
it means only a finite number
of terms are possible.  
\end{proof}

\subsection{Universal analytic Gröbner basis}

\begin{prop}
If $I \subset \KX[\s]$ is an ideal,
then the set 
\[Terms_\s^{an}(I)\left\lbrace \LT_x(I), \textrm{ for } x \in \KX[\s]^{an} \right\rbrace\] is finite.
\end{prop}
\begin{proof}
The proof is similar
to that of Prop. \ref{prop:finitess_of_LTs_of_homogeneous_ideal}.

Suppose that $Terms_\s^{an}(I)$ is infinite.
For any $M \in Terms_\s^{an}(I)$, we write $x_M$ for an
element of $\KX[\s]^{an}$ such that $\LT_{x_M}=M.$
Let $\Sigma := \{ x_M \: \diagup \: M \in Terms(I) \}.$
Our assumption states that $\Sigma$ is infinite.

Let $f_1 \in I$.
By Lemma \ref{lem:finitess_LTx},
$f_1$ has finitely many terms
that can be an $\LT_{x_M}(f_1).$
By the pigeonhole principle, there is an infinite set 
$\Sigma_1 \subset \Sigma$ and a term $m_1$ of 
$f_1$ such that for all $x_M \in \Sigma_1,$ $\LT_{x_M}(f_1)=m_1.$

Suppose that for some $ x_{M_1} \in \Sigma_1$,
$\LT_{x_{M_1}}(f_1)=m_1$ generates $\LT_{x_{M_1}}I$.
Then we prove 
that for any $ x_{M} \in \Sigma_1$,
$\LT_{x_{M}}(f_1)$ generates $\LT_{x_{M}}(I)$.
Let $ x_{M} \in \Sigma_1$, then by Lemma \ref{lem:reduction_GB_pour_x_analytique},
if   $\LT_{x_{M}}(f_1)$ does not
generate $\LT_{x_{M}}I$,
there is some $h \in I$ such that
no term of $h$ is divisible by $\LT_{x_{M}}(f_1).$
Since $\LT_{x_{M}}(f_1)=\LT_{x_{M_1}}(f_1)=m_1$
and $m_1$ generates
 $\LT_{x_{M_1}}(I)$ this is
a contradiction.
Consequently, for any $ x_{M} \in \Sigma_1$,
$\LT_{x_{M_1}}(f_1)=m_1$ generates $\LT_{x_{M}}(I)$.
However, this can not be the case as our assumption
was that there are infinitely many
elements $ x_{M} \in \Sigma_1$
all defining distinct $\LT_{x_{M}}(I)$.
Therefore, 
$\LT_{x_{M_1}}(f_1)=m_1$
does not generate $\LT_{x_{M_1}}I$.

By Lemma \ref{lem:reduction_GB_pour_x_analytique},
there is some $f_2 \in I$
such that no term of $f_2$
is divisible by $m_1$.

Then again, by Lemma \ref{lem:finitess_LTx},
$f_2$ has finitely many terms
that can be an $\LT_{x_M}(f_2)$
for $x_M \in \Sigma_1.$
Thus, by the pigeonhole principle,
there is an infinite set 
$\Sigma_2 \subset \Sigma_1$ and a term $m_2$ of 
$f_2$ such that for all $x_M \in \Sigma_2,$ $\LT_{x_M}(f_2)=m_2$ (and also since $\Sigma_2 \subset \Sigma_1$,
$\LT_{x_M}(f_1)=m_1$).

The same argument as above shows that 
for any $x_M \in \Sigma_2$, 
$(\LT_{x_M}(f_1),\LT_{x_M}(f_2))$ does not
generate $\LT_{x_M}(I).$
Then again, by Lemma $\ref{lem:reduction_totale_pour_x_analytique}$
there is some $f_3 \in I$
such that no term of $f_3$ is divisible by any of $(m_1,m_2).$
Since only finitely many terms of $f_3$,
can be an $\LT_{x_M}(f_3)$
for $x_M \in \Sigma_2.$
by the pigeonhole principle, there is an infinite set 
$\Sigma_3 \subset \Sigma_2$ and a term $m_3$ of 
$f_3$ such that for all $x_M \in \Sigma_3,$ $\LT_{x_M}(f_3)=m_3$ (and also since $\Sigma_3 \subset \Sigma_2$,
$\LT_{x_M}(f_1)=m_1, \LT_{x_M}(f_2)=m_2$).

\todo{Réécrire la conclusion 
avec des termes dans le style de CVV19
plutôt qu'avec des idéaux monomiaux}
Continuing the same way, we produce a descending chain of
infinite subsets $\Sigma \supset \Sigma_1 \supset \Sigma_2 \supset \Sigma_3 \supset \dots$
and an infinite strictly ascending chain of
 monomial ideals
$\left\langle m_1 \right\rangle \subset \left\langle m_1,m_2 \right\rangle \subset \left\langle m_1,m_2,m_3 \right\rangle \subset \dots$ in $K[\X].$
This contradicts the ascending chain condition in 
$K[\X]$ and concludes the proof.
\end{proof}

\begin{thm}
If $I \subset \KX[\s]$ is an ideal,
then there exists $F=(f_1,\dots,f_s) \in I^s$ 
such that $\LT_x(F)$ 
generates $\LT_x(I)$
for any $x \in \KX[\s]^{an}.$
\end{thm}
\begin{proof}
The proof follows the same 
lines as
for Theorem \ref{thm:universal_analytic_GB_homogeneous_case}:
if $I$ has the same $\LT$'s 
for two elements $x$ and $y$
in $\KX[\s]^{an}$, then
they have to share the same reduced GB.
Consequently, joining a finite
amount of reduced GB is enough to obtain
the desired $F.$
\end{proof}

\end{document}

overconvergence et adaptation de Mora
Factorisation par les pentes
Opérations sur les idéaux
Implantation

bonus :

\begin{lem}
For any $g, h \in \KX[\r].$
\[\LT(g) \LT(h)=\LT (gh). \] \label{lem:product_of_LT}
\end{lem}
\begin{proof}
We write $g=\LT(g)+g',$
$h=\LT(h)+h'$
with $\LT(g')<\LT(g),$
and $\LT(h')<\LT(h).$
Then,
\[gh = \LT(g)\LT(h)+
\LT(g)h'+ \LT(h)g'
+h'g'. \]
Then clearly,
$\LT(\LT(g)h')<\LT(g)\LT(h)$
and \textit{idem}
for $\LT(\LT(h)g')$
and $\LT(h'g').$
\end{proof}

\begin{lem}
Let $f \in \KX[\r]$.
Then $f$ is
invertible 
iff 
$\LT(f)$ is a non-zero
constant.
If $f \in \KzX[\r]$,
then $f$ is
invertible 
iff 
$\LT(f)$ is a constant and
$\LC(f)$ is invertible.
\end{lem} 
\begin{proof}
Let $f \in  \KX[\r]$ be such that
$\LT(f)$ is a non-zero
constant $c.$
We write $f=c-cg$ with $\val_\r (\LT(g))>0$
and all terms of $g$ are of total degree $\geq 1.$
Then $f$ is invertible
as formal power series, and 
$f^{-1}=c^{-1}+c^{-1}\sum_{m\geq 1} g^m.$
Moreover, as $\LT(g^m)=\LT(g)^m$
and $\val_\r (\LT(g)) >0$,
then $\LT(g^m) \rightarrow_{m \rightarrow + \infty} 0,$
hence, $g^m \rightarrow_{m \rightarrow + \infty} 0.$

Consequently, the series
$\sum_m g^m$ 
converge in $\KX[\r]$
and $f^{-1} \in \KX[\r].$
If $f \in \KzX[\r]$
and $c$ is invertible, the construction
is still valid.

For the converse,
let $f$ be invertible
in $\KX[\r].$
Then, 
thanks to Lemma
\ref{lem:product_of_LT}
$\LT(f)\LT(f^{-1})=1$
This is enought to conclude.
\end{proof}

